\newtheorem{claim}{Claim}[section]
\newtheorem{observation}{Observation}[section]
 \newtheorem{theorem}{Theorem}[section]
 \newtheorem{lemma}[theorem]{Lemma}
 \newtheorem{corollary}[theorem]{Corollary}
 \newtheorem{definition}[theorem]{Definition}
\def\GrabProofArgument[#1]{ #1: \egroup\ignorespaces}
\def\proof{\noindent\textbf\bgroup Proof%
	\@ifnextchar[{\GrabProofArgument}{. \egroup\ignorespaces}}
\newcommand{\agents}{\mathcal{N}}
\newcommand{\items}{\mathcal{M}}
\newcommand{\ite}{b}
\newcommand{\agent}{a}
\newcommand{\hug}{\mathcal{H}}
\newcommand{\lon}{\mathcal{L}}
\newcommand{\worst}{\mathcal{W}}
\newcommand{\best}{\mathcal{B}}
\newcommand{\util}{U}
\newcommand{\valu}{V}
\newcommand{\MMS}{\mathsf{MMS}}
\newcommand{\LPT}{\mathsf{LPT}}
\newcommand{\OPT}{O}
\newcommand{\AVG}{\overline{\valu}}
\newcommand{\EMMS}{\mathsf{EMMS}}
\newcommand*\samethanks[1][\value{footnote}]{\footnotemark[#1]}
\newcounter{proccnt}
\newcommand{\qed}{\hfill $\blacksquare$}
\newcommand{\konote}[1]{}
\definecolor{magenta}{RGB}{0,0,0}
\title{Fair Allocation of Indivisible Items With Externalities}
\author{
	Mohammad Ghodsi \thanks{Sharif University of Technology} \thanks{Institute for Research in Fundamental Sciences (IPM) School of Computer Science }
	\and Hamed Saleh \thanks{University of Maryland}
	\and Masoud Seddighin \samethanks[1]
}
\begin{document}
	\newcommand{\ignore}[1]{}
\renewcommand{\theenumi}{(\roman{enumi}).}
\renewcommand{\labelenumi}{\theenumi}
\sloppy

%
%

\date{}


\maketitle
\begin{abstract}
	One of the important yet insufficiently studied subjects in fair allocation is the externality effect among agents. For a resource allocation problem, externalities imply that a bundle allocated to an agent may affect the utilities of other agents.

In this paper, we conduct a study of fair allocation of indivisible goods when the externalities are not negligible. We present a simple and natural model, namely \emph{network externalities}, to capture the externalities. To evaluate fairness in the network externalities model, we generalize the idea behind the notion of maximin-share ($\MMS$) to achieve a new criterion, namely, \emph{extended-maximin-share} ($\EMMS$). Next, we consider two problems concerning our model.

First, we discuss the computational aspects of finding the value of $\EMMS$ for every agent. For this, we introduce a generalized form of partitioning problem that includes many famous partitioning problems such as maximin, minimax, and leximin partitioning problems. We show that a $1/2$-approximation algorithm exists for this partitioning problem.

Next, we investigate on finding approximately optimal $\EMMS$ allocations. That is, allocations that guarantee every agent a utility of at least a fraction of his extended-maximin-share. We show that under a natural assumption that the agents are $\alpha$-self-reliant, an $\alpha/2$-$\EMMS$ allocation always exists. The combination of this with the former result yields a polynomial-time $\alpha/4$-$\EMMS$ allocation algorithm.

\end{abstract}
\section{Introduction}
\label{intro}
Consider a scenario where there is a collection of $m$ indivisible goods that are to be divided amongst $n$ agents. For a properly chosen notion of fairness, we desire our division to be fair. Motivating examples are dividing the inherited wealth among heirs, dividing assets of a bankrupt company among creditors, divorce settlements, task assignments, etc.

Fair division has been a central problem in Economic Theory. This subject was first introduced in 1948 by Steinhaus \cite{steinhaus1948problem} in the Polish school of mathematics. The primary model used the metaphor of cake to represent a single divisible resource that must be divided among a set of agents. \emph{Proportionality} is one of the most well-studied notions defined to evaluate the fairness of a cake division protocol. An allocation of a cake to $n$ agents is proportional,
if every agent feels that his allocated share is worth at least $1/n$ of the entire cake.
Despite many positive results regarding proportionality and other fairness notions (e.g. envy-freeness) in cake-cutting (see among many others, \cite{brams1995envy,robertson1998cake, steinhaus1948problem,dehghani2018envy,aziz2016discrete,bei2012optimal}), moving beyond the metaphor of cake the problem becomes more subtle. For example, when the resource is a set of indivisible goods, \textcolor{magenta}{a proportional allocation} is not guaranteed to exist for all instances\footnote{For example, consider the case that there are two agents and the resource is a single indivisible item.}.

For allocation of indivisible goods, Budish \cite{budish2011combinatorial} introduced a new fairness criterion, namely \emph{maximin-share}, that attracted a lot of attention in recent years \cite{amanatidis2015approximation,procaccia2014fair,kurokawa2016can,ghodsi2017fair,suksompong2017approximate}. This notion is a relaxation of proportionality for the case of indivisible items. Assume that we ask agent $i$ to distribute the items into $n$ bundles, and take the bundle with the minimum value. In such a situation, agent $i$ distributes the items in a way that maximizes the value of the minimum bundle. The maximin-share value of agent $i$ is equal to the value of the minimum bundle in the best possible distribution. Formally, the maximin-share of agent $i$, denoted by $\MMS_i$, for a set $\items$ of items and $n$ agents is defined as
$$
\max_{P=\langle P_1,P_2,\ldots,P_n\rangle \in \Pi} \min_j \valu_i(P_j)
,$$
where $\Pi$ is the set of all partitions of $\items$ into $n$ bundles, and $\valu_i(P_j)$ is the value of bundle $P_j$ to agent $i$. \textcolor{magenta}{In a nice paper,} Procaccia and Wang \cite{procaccia2014fair} show that in some instances, no allocation can guarantee maximin-share \textcolor{magenta}{to} all the agents, but an allocation guaranteeing each agent $2/3$ of his maximin-share always exists. This factor \textcolor{magenta}{has} been recently improved to $3/4$ by Ghodsi et al. \cite{ghodsi2017fair}.

Our goal in this paper is to generalize the maximin-share to the case of the agents with externalities. Roughly speaking, externalities are the influences (costs or benefits) incurred by other parties. For resource allocation problems, externalities imply that the bundle allocated to an agent may affect the utility of the other agents. In this work, we assume that the externalities are positive, which is a common assumption in the literature \cite{haghpanah2011optimal,bronzei2013externalities,li2015truthful}.

There are many reasons to consider externalities in an allocation problem. The goods to be divided might exhibit network effects. For example, the value of an XBox to an agent increases as more of his friends also own an XBox, since they can play online. Many merit goods generate positive consumption externalities. In healthcare, individuals who are vaccinated entail positive externalities to other agents around them, since they decrease the risk of contraction. Furthermore, allocating a good to an agent might indirectly affect the utility of his friends since they can borrow it.

\textcolor{magenta}{
The messages of our paper can be condensed as follows:
\textbf{First}, considering the externalities is important: value of $\EMMS$ (the natural generalization we define to adapt $\MMS$ to the environment with externalities) and $\MMS$ might have a large gap. In fact, we show that even a small amount of influence can result in an unbounded gap between these two notions. Thus, when the externalities are not negligible, methods that guarantee $\MMS$ to all the agents might no longer be applicable. 
\textbf{Second}, regarding our model and fairness notion, we can approximately maintain fairness in the environment with externalities. In the next section, we give a more detailed description of our results and the techniques used in the paper. 
}
\subsection{Our Results and Techniques}

In this paper, we take one step toward understanding the impact of externalities in allocation of indivisible items. We start by proposing a general model to capture the externalities in a fair allocation problem under additive assumptions. Although we present some of our results with regard to this general model, the main focus of the paper is on a more restricted model, namely \emph{network externalities}, where the influences imposed by the agents can be represented by a weighted directed graph. \textcolor{magenta}{This model is inspired by the well-studied linear-threshold model in the context of network diffusion.}

We suggest the \emph{extended-maximin-share} notion ($\EMMS$) to adapt maximin-share to \textcolor{magenta}{the} environment with externalities. Similar to maximin-share, our extension is motivated by the maximin strategy in cut-and-choose games. We discuss two aspects of our notion.

First, we discuss the hardness of computing the value of $\EMMS_i$, where $\EMMS_i$ is the extended-maximin-share of agent $i$. For this, we introduce a generalized form of the partition problem that includes many famous partitioning problems such as maximin, minimax, and leximin partitioning problems. This generalized problem is NP-hard due to a trivial reduction from the partition problem. In Section \ref{hardness}, we propose a $1/2$-approximation algorithm for computing $\EMMS_i$ (Theorem \ref{sapx}). In fact, we show that the $\LPT$ method, which is a famous greedy algorithm in the context of job scheduling, guarantees $1/2$-approximation for the general partition problem.  We also reveal several structural properties of such partitions.

Second, we consider the approximate $\alpha$-$\EMMS$ allocation problem. That is, allocations that guarantee every agent a utility of at least a fraction $\alpha$ of his extended-maximin-share.
We define the property of $\beta$-self-reliance and show that when the agents are  $\beta$-self-reliant, there exists an allocation that guarantees every agent $i$ a utility of at least $\beta/2$-$\EMMS_i$ (Theorem \ref{mainres}). This is our most technically involved result.
The basic idea behind our method is as follows: every agent has an expectation value which estimates the utility that he must gain through the algorithm. 
Initially, the expectation value of agent $i$ is at least $\EMMS_i/2$.  In every step of the algorithm, we choose an agent and allocate him a bundle \textcolor{magenta}{with value} at least as his expectation value. Regarding the bundle allocated to this agent, we decrease the expectation value of the remaining agents. Although the algorithm is simple, the analysis is rather complex and heavily exploits the structural properties of the general partitioning problem.
The combination of our existential proof with the $1/2$-approximation algorithm for computing $\EMMS$ yields a polynomial time $\beta/4$-$\EMMS$ allocation algorithm.

\subsubsection*{\textbf{Additional Results}}
 \textcolor{magenta}{Br{\^a}nzei et al.}~\cite{bronzei2013externalities}   extend the proportionality to the case of the agents with externalities. Here, we define the \emph{average-share} notion and claim that average-share is a better extension of proportionality in presence of externalities. We justify our claim by considering the implications among extended-maximin-share, average-share, and extended-proportionality.

\textcolor{magenta}{In interest of space, most of the proofs are deferred to the appendix. }

\subsection{Related Work}
\emph{Maximin-share} has received a lot of attention over the past few years \cite{procaccia2014fair,ghodsi2017fair,amanatidis2015approximation,
gourves2017approximate,kurokawa2016can,farhadi2017fair,suksompong2017approximate,
anari2010equilibrium,branzei2013matchings}. The counter-example suggested by Procaccia and Wang \cite{procaccia2014fair} refutes the existence of any allocation with the maximin-share guarantee. In addition, Procaccia and Wang propose the first approximation algorithm that guarantees each agent $2/3$ of his maximin-share. Recently, Ghodsi et al. \cite{ghodsi2017fair} improve the approximation ratio to $3/4$. For the special case of $3$ agents, Procaccia and Wang \cite{procaccia2014fair} prove that guaranteeing $3/4$ of every agent's maximin-share is always possible. This factor is later improved to $7/8$ by Amanatidis et al. \cite{amanatidis2015approximation} and to $8/9$ by Gourv{\`e}s and Monnot \cite{gourves2017approximate}.
Kurokawa et al. \cite{kurokawa2016can} show that when the valuations are drawn at random, an allocation with maximin-share guarantee exists with a high probability, and it can be found in polynomial time.

Other works generalize maximin-share for different settings. For example, Farhadi et al. \cite{farhadi2017fair} generalize maximin-share for the case of asymmetric agents with different entitlements. They introduce the \emph{weighted-maximin-share} ($\textsf{WMMS}$) criterion and propose an allocation algorithm with a $1/2$-\textsf{WMMS} guarantee.
Suksompong \cite{suksompong2017approximate} considers the case that the items must be allocated to groups of agents. Gourv{\`e}s and Monnot \cite{gourves2017approximate} extend maximin-share to the case that the goods collectively received by the agents satisfy a matroidal constraint and propose an allocation with a $1/2$ maximin-share guarantee.

In recent years, considering externalities for different problems has received an increasing attention in computer science \cite{kempe2008cascade,haghpanah2011optimal,bronzei2013externalities,li2015truthful,
leme2012sequential,anari2010equilibrium,mirrokni2012fixed,branzei2013matchings,farhadi2017fair}. For example, Haghpanah et al. \cite{haghpanah2011optimal} study auction design in the presence of externalities. In a more related work, Br{\^a}nzei et al. \cite{bronzei2013externalities} consider externalities in the cake cutting problem. They introduce a model for cake cutting with externalities and generalize classic fairness criteria to the case with externalities. Following this work, other works also consider externalities in fair division. For example, Li et al. \cite{li2015truthful} study truthful and fair methods for allocating a divisible resource \textcolor{magenta}{with} externalities.

\section{Model}
\label{model}
Throughout the paper, we assume $\items $ 
is a set of $m$ indivisible items that must be fairly allocated to a set $\agents = [n]$ of agents, where $[n]$ denotes the set $\{1,2,\ldots,n\}$. We introduce our model in Section \ref{model} and our fairness criteria in Section \ref{fairness}.

\subsection{Modeling the Externalities.}\label{model}
We start by proposing a general model to \textcolor{magenta}{represent} the externalities. In the \textbf{general externalities} model, we suppose that for every set $S$ of items, $\valu_{j,i}(S)$ reflects the utility that agent $i$ \textcolor{magenta}{recieves} by allocating $S$ to agent $j$. In this model, there is no restriction on the value of $\valu_{j,i}(.)$, except that the valuations are additive, i.e.,
$
\valu_{j,i}(S) = \sum_{\ite \in S} \valu_{j,i}(\{\ite\}).
$

We also consider a more restricted model where the externalities are due to the relationships between agents. For example, friends may share their items with a probability which is a function of their relationship. We consider a directed weighted graph $G$ where for every pair of vertices $i$ and $j$,  the weight of edge $(\overrightarrow{j,i})$, denoted by $w_{j,i}$, represents the influence of agent $j$ on agent $i$. We \textcolor{magenta}{refer} such a graph as \emph{influence graph}. If we allocate item $\ite$ to agent $j$, the utility gained by agent $i$ from this allocation would be $\valu_i(\{\ite\}) \cdot w_{j,i} $, where $\valu_i$ is the valuation function of agent $i$. As an example, consider the influence graph illustrated in Figure \ref{graph}. For the allocation that
 allocates $S_i$ to agent $i$ ($1 \leq i \leq 6$), total utility of agent $1$ would be $\valu_1(S_1)\cdot 0.8 + \valu_1(S_2)\cdot 0.2$. We call such a model the \textbf{network externalities} model. Notice that, in this model, the utility of agent $i$ for allocating a set $S$ of items to himself is $\valu_i(S) \cdot w_{i,i}$. In this paper, we suppose \textcolor{magenta}{w.l.o.g.} that the weights of the edges in \textcolor{magenta}{the} influence graph are normalized, so that for every agent $i$, $\sum_j w_{j,i} = 1$. Although we prove some of our results for the general externalities model, our main focus is on the network externalities model.

\begin{definition}
\label{def1}
We say agent $i$ is $\beta$-self-reliant, if $w_{i,i} \geq \beta$.
\end{definition}

For example in Figure \ref{graph}, agent $1$  is $0.8$-self-reliant and agent $5$ is $0.55$-self-reliant.  In real-world situations, we expect $\beta$ to be a value close to $1$.
Note that, being $\beta$-self-reliant for $\beta \simeq 1$ doesn't mean that we can ignore the externalities (for example, see the instance presented in the proof of Observation \ref{gap}).

\begin{figure}[!tbp]
	\centering
	\begin{minipage}[b]{0.49\textwidth}
		\centerline{
			\includegraphics[scale=0.6]{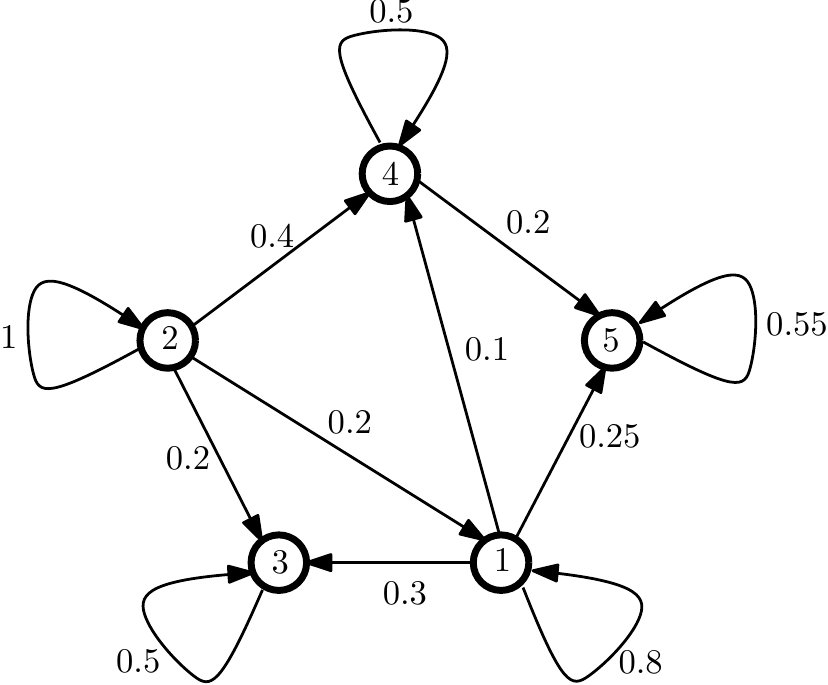}
		}
		\caption{An example of influence graph.}
		\label{graph}
	\end{minipage}
	\begin{minipage}[b]{0.49\textwidth}
		\centerline{
			\includegraphics[scale=0.4]{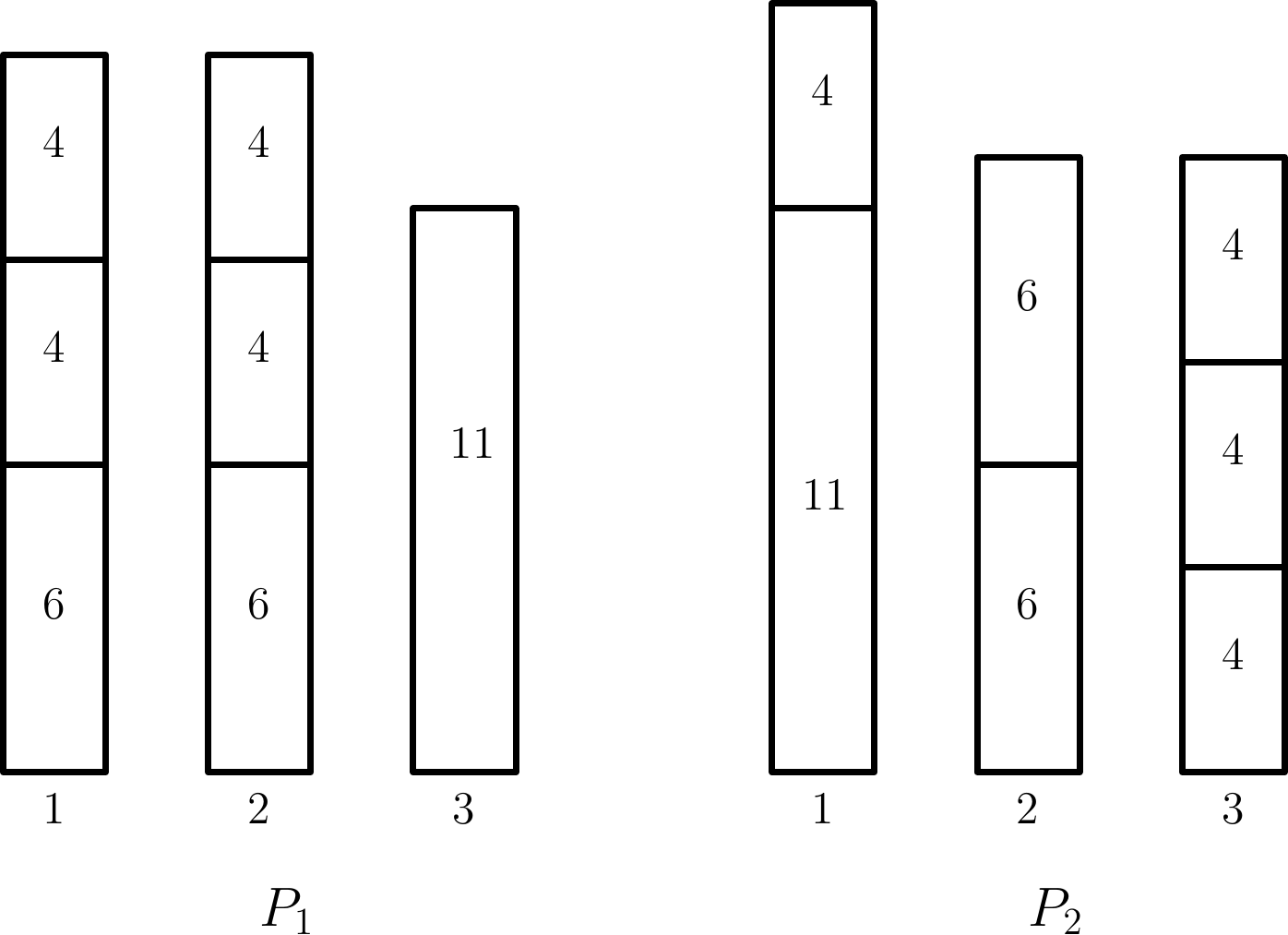}
		}
		\caption{The illustration of two partitions $P_1$ and $P_2$, which are respectively a 
			minimax and a maximin partitions.}
		\label{minimaximin_image}
	\end{minipage}
\end{figure}  

\begin{definition}
For every agent $\agent_i$, we define the \emph{influence vector} of agent $i$, denoted by $x_i = [x_{i,1},x_{i,2},\ldots,x_{i,n} ]$ as the vector representing the influences of the agents on agent $i$ in the influence graph, in non-decreasing order.
\end{definition}

For example, for the graph in Figure \ref{graph}, we have $x_4 = [0,0,0.1,0.4,0.5]$ and $x_5=[0,0,0.2,0.25,0.55]$. Note that when agent $i$ is $\beta$-self-reliant, we have $x_{i,n} \geq \beta$.

\subsection{Fairness Criteria}\label{fairness}
Proportionality and envy-freeness are two of the most common criteria in the literature of cake cutting. For envy-freeness, two extensions are introduced to deal with externalities: \emph{swap-envy-freeness} by Velez \cite{velez2008fairness} and \emph{swap-stability} by Br{\^a}nzei et al. \cite{bronzei2013externalities}. In addition, Br{\^a}nzei et al. \cite{bronzei2013externalities} defined \emph{extended-proportionality} as follows.

\begin{definition}[Extended-proportionality \cite{bronzei2013externalities}]
Let $\hat{V}_i$ be the maximum utility that agent $i$ can gain by allocating each item to the right agent, i.e., agent that maximizes the value of that item for agent $i$. Allocation $A$ is extended-proportional, if the utility of every agent $i$ from $A$ is at least $\hat{V}_i/n$.
\end{definition}

In this paper, we suggest another extension of proportionality, which we call \emph{average-share}.

\begin{definition}[Average-share]
\label{ave}
The average value of item $\ite$ for agent $i$, denoted by $\overline{\valu}_i(\{\ite\})$, is defined as
$
\sum_j \valu_{j,i}(\{\ite\})/n
$. The average-share of agent $i$ is $\overline{\valu}_i(\items)  = \sum_{\ite \in \items} \overline{\valu}_i(\{\ite\}).$ Furthermore, an allocation is said to be average, if the total utility of every agent from this allocation is at least as his average-share. 
\end{definition}

It is easy to observe that both of these notions are equivalent to proportionality in the absence of externalities. However, average-share is a stronger notion, i.e., for every agent $i$,
we have $\overline{\valu}_i(\items) \geq \hat{V}_i/n$.
For the network externalities model, we have  $$\hat{V}_i/n = V_i(\items)\cdot (\max_j w_{j,i})/n , \qquad \mbox{,} \qquad \overline{\valu}_i(\items) = \valu_i(\items)\cdot ( \sum_j w_{j,i})/n.$$

We claim that average-share is a better extension of proportionality to capture the externalities. Note that extended-proportionality suffers from a drawback, that is its low sensitivity to the externalities. For instance, it is reasonable to assume that the best allocation of agent $i$ is to allocate every item to himself. In such a situation, extended-proportionality completely ignores the externalities. We discuss more on this in Section \ref{justify}.

It is worth to mention that both the notions described above are too strong to be \textcolor{magenta}{delivered} in the case of indivisible items. For example, when there are no externalities, no allocation can guarantee \textcolor{magenta}{neither} envy-freeness \textcolor{magenta}{nor} proportionality, or even an approximation of them. Thus, no extension of these notions (including extended-proportionality and average-share) can be guaranteed when items are indivisible.


\subsubsection*{\textbf{Maximin-share}}
In this paper, our \textcolor{magenta}{main} focus is on the maximin-share ($\MMS$) criterion. As mentioned, this notion is introduced by Budish \cite{budish2011combinatorial} as a fairness criteria in division of indivisible items. In Section \ref{intro}, we gave a formal definition of this notion. The intriguing fact about $\MMS$ solution is that it can be motivated by the ``cut and choose'' game. In this game, an agent divides the items into $n$ bundles and lets other agents choose their bundle first. In the worst-case scenario, the least valued bundle remains, and hence the maximin strategy is to divide the items in a way that the minimum bundle is as attractive as possible.  In contrast to proportionality and envy-freeness, guaranteeing a constant fraction of the maximin-share for all the agents is always possible \cite{procaccia2014fair,ghodsi2017fair}.

To extend maximin-share to the case of the agents with externalities, again we consider the worst-case scenario in an ``extended cut and choose'' game. 
Suppose that an agent divides the items into $n$ bundles, and other agents somehow distribute these bundles (one bundle to each agent). The maximin strategy of this agent is to divide the items in a way that maximizes his utility in the worst possible scenario (a scenario that minimizes his utility). In fact, we define the \emph{extended-maximin-share} of each agent $i$ as his outcome in the ``extended cut and choose'' game, regarding maximin strategy.

Formally, let $P = \langle P_1,P_2,\ldots, P_n \rangle $ be a partition of $\items$ into $n$ bundles. Furthermore, let $\mathcal{A}: P \rightarrow [n]$ be an allocation function that allocates every set $P_i$ to agent ${\mathcal{A}(P_i)}$. For brevity, when $P$ is clear from the context, we use $\mathcal{A}_i$ instead of ${\mathcal{A}(P_i)}$ to refer to the agent whom $P_i$ is allocated to. Since exactly one bundle must be allocated to each agent, $\mathcal{A}$ is a bijection. The utility of agent $i$ for an allocation $\mathcal{A}$ is :
$
\util_i({{\mathcal{A}}}) = \sum_j \valu_{{\mathcal{A}_j},{i}}(P_j).
$
The worst allocation of $P$ regarding agent $i$, denoted by $\worst_i(P)$, is the allocation of $P$ that minimizes the utility of agent $i$: 
$
\worst_i(P) = \arg \min_{\mathcal{A} \in \Omega_P} \util_i({\mathcal{A}}),
$
where $\Omega_P$ is the set of all $n!$ different allocations of $P$. Similarly, the best allocation of $P$ is defined as:
$
\best_i(P) = \arg \max_{\mathcal{A} \in \Omega_P} \util_i({\mathcal{A}}).
$
Finally, the extended-maximin-share of agent $i$, denoted by $\EMMS_i$, is defined as:
$$
\EMMS_i = \max_{P \in \Pi} \util_i(\worst_i(P)),
$$
where $\Pi$ is the set of all partitions of $\items$ into $n$ subsets. We also define the \emph{optimal partition} of $\items$ regarding agent $i$, denoted by $\OPT_i$, as the partition that determines the value of $\EMMS_i$,
$
\OPT_i = \arg \max_{P \in \Pi} \util_i(\worst_i(P)).
$
Throughout the paper,  when speaking of the network externalities model, we assume that the bundles in $O_i = \langle O_{i,1}, O_{i,2}, \ldots, O_{i,n} \rangle$ are sorted by their decreasing values for agent $i$, i.e., for all $j$, $\valu_i(O_{i,j}) \geq \valu_i(O_{i,j+1}). $ In addition, when the agent is clear from the context, for any partition $P$ we use $P_j$ to refer to the $j$'th valuable bundle of $P$, regarding that agent.

Finally, an $\alpha$-$\EMMS$ fair allocation problem with the externalities is defined as follows: is there an allocation such that every agent $i$ receives a utility of at least $\alpha\cdot\EMMS_i$?

\section{Model Evaluation}
\label{justify}
In Section \ref{model}, we introduced three notions: extended-proportionality, average-share, and extended-maximin-share. For a better understanding of these notions, here we briefly compare them in the \textbf{general externalities} model. We already know that average-share is stronger than extended-proportionality. In Lemma \ref{cam}, we prove the same proposition for extended-maximin-share.

\begin{lemma}
	\label{cam}
	Average-share is a stronger notion than extended-maximin-share.
\end{lemma}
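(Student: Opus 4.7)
The plan is to establish that $\overline{\valu}_i(\items) \geq \EMMS_i$ holds for every agent $i$ in the general externalities model, from which the lemma follows immediately: any allocation giving each agent $i$ a utility of at least $\overline{\valu}_i(\items)$ (an average allocation) automatically gives each agent at least $\EMMS_i$ (an $\EMMS$-fair allocation).

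The core step is an averaging argument. I would fix an agent $i$ and an arbitrary partition $P = \langle P_1, \ldots, P_n \rangle \in \Pi$, and take the uniform average of $\util_i(\mathcal{A})$ over all $n!$ bijections $\mathcal{A} \in \Omega_P$. By symmetry of the uniform distribution over bijections, each agent $k$ is assigned to each bundle $P_j$ in exactly $(n-1)!$ bijections, so
$$
\frac{1}{n!}\sum_{\mathcal{A} \in \Omega_P} \util_i(\mathcal{A})
= \frac{1}{n!}\sum_{\mathcal{A}}\sum_{j} \valu_{\mathcal{A}_j,i}(P_j)
= \frac{1}{n}\sum_{k}\sum_{j} \valu_{k,i}(P_j).
$$
Invoking additivity and the fact that $P$ partitions $\items$, the inner sum collapses to $\valu_{k,i}(\items)$, giving $\frac{1}{n}\sum_{k} \valu_{k,i}(\items)$. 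Applying additivity once more at the item level rewrites this as $\sum_{\ite \in \items}\frac{1}{n}\sum_{k}\valu_{k,i}(\{\ite\}) = \overline{\valu}_i(\items)$.

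Since the minimum of any finite collection is at most its mean, the calculation above yields $\min_{\mathcal{A} \in \Omega_P} \util_i(\mathcal{A}) \leq \overline{\valu}_i(\items)$. As $P$ was arbitrary, taking the maximum over $\Pi$ preserves the inequality and gives $\EMMS_i = \max_{P \in \Pi}\min_{\mathcal{A} \in \Omega_P} \util_i(\mathcal{A}) \leq \overline{\valu}_i(\items)$, which is exactly what was required. The argument uses only additivity of $\valu_{j,i}(\cdot)$, so it goes through in the general externalities model with no need to invoke the network structure. There is no real obstacle here; the only thing to be careful about is not accidentally double-counting when swapping the order of summation over agents, bundles, and items.
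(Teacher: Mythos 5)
Your proof is correct and is essentially the same averaging argument as the paper's: you show the worst allocation of any partition is at most the mean utility over a family of allocations in which each agent receives each bundle equally often, and that mean equals $\overline{\valu}_i(\items)$. The only cosmetic difference is that you average over all $n!$ bijections while the paper averages over just $n$ cyclic shifts of $O_i$.
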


By a similar argument \textcolor{magenta}{as in the proof of Lemma \ref{cam}}, we can show that for an arbitrary partition $P$, $\overline{\valu}_i(\items) \leq \util_i(\best_i(P))$. Therefore, for any partition $P$ we have
$\EMMS_i \leq \util_i(\best_i(P)).$
Lemma \ref{cam} states that extended-maximin-share is implied by average-share. However, as we show in Lemma \ref{cmp}, there is no implication between extended-proportionality and extended-maximin-share. 

\begin{lemma}
	\label{cmp}
	Extended-maximin-share does not imply extended-proportionality, nor vice versa. 
\end{lemma}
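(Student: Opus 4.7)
The plan is to establish both non-implications by constructing small explicit counterexamples in the network externalities model (since that is a restricted case of the general externalities model, counterexamples there suffice). For each direction I choose an instance together with a particular allocation whose utility profile meets one notion's threshold but fails the other's.

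For the direction that extended-maximin-share does not imply extended-proportionality, I would take $n=2$ and three items with $\valu_1(\ite_1)=\valu_1(\ite_2)=\valu_1(\ite_3)=1$ and a fully self-reliant influence structure $w_{1,1}=1$, $w_{2,1}=0$. Then $\hat{V}_1=3$, so extended-proportionality demands that agent $1$ receive utility at least $3/2$. In contrast, any partition of three items into two bundles has one bundle containing a single item, and the worst allocation hands that bundle to agent $1$; checking the three possible partitions shows $\EMMS_1=1$. I would then display the allocation that gives agent $1$ exactly one item and agent $2$ the other two: agent $1$'s utility equals $1=\EMMS_1$ but is strictly less than $3/2$. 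Fixing agent $2$'s preferences symmetrically (e.g.\ $\valu_2 \equiv 1$ and $w_{2,2}=1$) keeps both his EMMS and extended-proportionality thresholds satisfied, so the overall allocation is EMMS but not extended-proportional.

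For the reverse direction I need $\EMMS_1 > \hat{V}_1/n$ together with an allocation whose utility for agent $1$ lies in this gap. I would take $n=m=2$ with $\valu_1(\ite_1)=3$, $\valu_1(\ite_2)=1$ and influences $w_{1,1}=0.4$, $w_{2,1}=0.6$. A direct computation gives
\[
\hat{V}_1/n \;=\; \tfrac{1}{2}\bigl(0.6\cdot 3 + 0.6\cdot 1\bigr) \;=\; 1.2,
\]
while the partition $\langle\{\ite_1\},\{\ite_2\}\rangle$ has worst-allocation utility $0.4\cdot 3 + 0.6\cdot 1 = 1.8$ (the alternative $\langle\{\ite_1,\ite_2\},\emptyset\rangle$ only yields $1.6$), so $\EMMS_1 = 1.8$. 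Allocating both items to agent $1$ gives her utility $0.4\cdot 4 = 1.6$, which exceeds $\hat{V}_1/n = 1.2$ but is strictly below $\EMMS_1 = 1.8$. To make this a legitimate instance I pick agent $2$'s parameters so that the same allocation also meets his thresholds — for example $\valu_2\equiv 1$, $w_{1,2}=1$, $w_{2,2}=0$, under which agent $2$ obtains utility $2$, beating both his $\EMMS_2=1$ and his $\hat{V}_2/n = 1$.

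The heart of the argument is purely combinatorial, and the only mild obstacle is bookkeeping: I must enumerate the few candidate partitions in each example to pin down $\EMMS_i$ exactly, verify $\hat{V}_i$ as a sum of per-item maxima, and pick the second agent's valuation and influence row so that the single displayed allocation genuinely serves as a counterexample for the instance as a whole and not merely for one agent in isolation.
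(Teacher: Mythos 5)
Your proof is correct and takes essentially the same route as the paper: construct explicit instances in which the $\EMMS$ threshold and the extended-proportionality threshold $\hat{V}_i/n$ compare in opposite directions. The paper's version is more minimal --- a single item and $n$ agents under zero versus full externalities, giving $\EMMS_1=0<\hat{V}_1/n$ in one scenario and $\EMMS_1=\hat{V}_1>\hat{V}_1/n$ in the other --- whereas you additionally exhibit concrete allocations witnessing each separation, which is sound but not required.
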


The fact that in the case without externalities, proportionality is stronger than maximin-share ($\MMS$),  inspires the idea that average-share is a more appropriate extension of proportionality for the case with externalities. In addition, comparing two scenarios in the proof of Lemma \ref{cmp} reveals that the extended-proportionality has a low sensitivity to the externalities.  

In the last part of this section, we show that for $n=2$, a simple \emph{cut and choose} method guarantees $\EMMS_i$ to both the agents. Note that there are instances in which neither extended-proportionality nor average-share can be guaranteed even for two agents. 
\begin{lemma}
	\label{cac}
	For two agents, the following simple \emph{cut and choose} algorithm yields a $1$-$\EMMS$ allocation:
	\begin{itemize}
		\item Ask the first agent to partition the items into his optimal partition $O_1$.
		\item Ask the second agent to allocate $O_1$ (one bundle to each agent). 
	\end{itemize}
\end{lemma}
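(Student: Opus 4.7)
The plan is to verify each agent's guarantee separately, using the defining optimality of $O_1$ for agent $1$ and the general upper bound $\EMMS_i \leq \util_i(\best_i(P))$ noted right after Lemma \ref{cam} for agent $2$. Throughout, I would recall that with $n=2$ there are exactly two allocations of any two-bundle partition: give $O_{1,1}$ to agent $1$ and $O_{1,2}$ to agent $2$, or swap.

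For agent $1$, the argument is purely from the definition of the optimal partition. By construction, $O_1 = \arg\max_{P \in \Pi} \util_1(\worst_1(P))$, so $\util_1(\worst_1(O_1)) = \EMMS_1$. Since $\worst_1(O_1)$ is, by definition, the allocation of $O_1$ that minimizes agent $1$'s utility, every allocation $\mathcal{A} \in \Omega_{O_1}$ satisfies $\util_1(\mathcal{A}) \geq \util_1(\worst_1(O_1)) = \EMMS_1$. In particular, whichever of the two allocations agent $2$ chooses, agent $1$ receives at least $\EMMS_1$.

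For agent $2$, I would note that a rational agent $2$ selects the allocation of $O_1$ that maximizes $\util_2$, so agent $2$'s realized utility equals $\util_2(\best_2(O_1))$. The remark immediately following Lemma \ref{cam} asserts that $\EMMS_i \leq \util_i(\best_i(P))$ for every partition $P$ and every agent $i$; instantiating with $i=2$ and $P=O_1$ yields $\EMMS_2 \leq \util_2(\best_2(O_1))$, which is exactly what we need.

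I do not anticipate a technical obstacle: the argument is essentially a clean unwinding of the $\EMMS$ definition combined with the already-established bound. The only subtle point worth highlighting is that although agent $2$ picks greedily and is not forced to select $\worst_1(O_1)$, agent $1$'s guarantee is unaffected, because $\EMMS_1$ is defined as a worst-case quantity over allocations of $O_1$ and is therefore a lower bound on $\util_1$ irrespective of which of the two allocations materializes.
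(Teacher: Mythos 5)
Your argument is correct and matches the paper's own proof: agent $1$ is covered because any allocation of $O_1$ yields at least $\util_1(\worst_1(O_1)) = \EMMS_1$, and agent $2$ is covered by the bound $\EMMS_2 \leq \util_2(\best_2(O_1))$ established after Lemma \ref{cam}. Nothing further is needed.
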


\section{Computing $\mathbf{\EMMS}$}
\label{ecomp}
\label{hardness}
In this section, we study the problem of computing $\EMMS_i$ and \textcolor{magenta}{$\OPT_i$}. A closer look at the model \textcolor{magenta}{reveals} that the challenges to calculate $\EMMS$ are twofold.  One is to find the worst allocation of a given partition, and the other is to find a partition that maximizes the utility of the worst allocation. In Lemma \ref{fnp} and Observation \ref{snp}, we explore the hardness of these problems for the \textbf{general externalities} model. We then focus on the network externalities model and give a constant approximation algorithm for computing $\EMMS$. 

\begin{lemma}
\label{fnp}
Given a partition $P= \langle P_1,P_2,\ldots,P_n\rangle$ of the items in $\items$, the worst allocation of $P$ regarding agent $i$  can be found in polynomial time. 
\end{lemma}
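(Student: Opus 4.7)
The key observation is that the worst-allocation problem, once the partition $P$ is fixed, is exactly an \emph{assignment problem} and can therefore be solved by standard minimum-cost bipartite matching.

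I would proceed as follows. First, form a complete bipartite graph $H = (\mathcal{B} \cup \agents, E)$ where $\mathcal{B} = \{P_1, P_2, \ldots, P_n\}$ is the set of bundles in the given partition and $\agents = [n]$ is the set of agents. For every bundle $P_j$ and every agent $k$, assign the edge $(P_j, k)$ the weight
$$
c(P_j, k) \;=\; \valu_{k, i}(P_j) \;=\; \sum_{\ite \in P_j} \valu_{k,i}(\{\ite\}),
$$
which is precisely the utility accrued to agent $i$ if bundle $P_j$ were allocated to agent $k$. Because valuations are additive, each weight $c(P_j, k)$ is computable in $O(m)$ time, so the whole weighted graph is built in $O(nm + n^2)$ time.

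Next, observe that an allocation $\mathcal{A}$ of $P$ is, by definition, a bijection between bundles and agents, i.e., a perfect matching in $H$. Moreover,
$$
\util_i(\mathcal{A}) \;=\; \sum_j \valu_{\mathcal{A}_j, i}(P_j) \;=\; \sum_j c(P_j, \mathcal{A}_j),
$$
so the total weight of the matching corresponding to $\mathcal{A}$ equals $\util_i(\mathcal{A})$ exactly. Consequently, $\worst_i(P) = \arg\min_{\mathcal{A}} \util_i(\mathcal{A})$ corresponds to a \emph{minimum-weight perfect matching} in $H$. Computing such a matching is the classical assignment problem, solvable in polynomial time by the Hungarian algorithm (or any modern $\tilde{O}(n^3)$ min-cost bipartite matching routine). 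Returning the allocation encoded by this matching yields $\worst_i(P)$ as required.

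There is no real obstacle here: the only step that requires justification is the equivalence between allocations and perfect matchings, which follows immediately from the fact that $\mathcal{A}$ is a bijection between bundles and agents, together with the additivity of valuations. Note that this argument works for the \textbf{general externalities} model (not just the network externalities model), since at no point do we exploit the multiplicative structure $\valu_{j,i}(S) = \valu_i(S)\cdot w_{j,i}$; we only use that $\valu_{j,i}(\cdot)$ is additive so that $\valu_{j,i}(P_j)$ is efficiently computable from the item-level values.
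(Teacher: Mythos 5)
Your proposal is correct and matches the paper's own proof: both reduce the problem to a minimum-weight perfect matching (assignment problem) on the complete bipartite graph between the bundles of $P$ and the agents, with edge weights $\valu_{j,i}(P_k)$, solvable in polynomial time by standard matching or network-flow algorithms. Your added remarks on additivity and the applicability to the general externalities model are consistent with the paper, which also states this lemma for the general model.
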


\begin{observation}
\label{snp}
Since finding the maximin partition of a set of items is $NP$-hard \cite{woeginger1997polynomial}, finding the optimal partition of $m$ items and $n$ agents with externalities is also $NP$-hard.
\end{observation}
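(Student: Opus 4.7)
The plan is to exhibit an explicit polynomial-time reduction from the classical maximin partition problem to the problem of computing $\OPT_i$ in the network externalities model. Given an instance of maximin partition with $n$ agents and $m$ items in $\items$ valued by a single additive function $\valu$, I would construct an instance of our problem using the same $n$ agents and the same item set, where every agent's valuation function is taken to be $\valu$, and the influence graph is the identity: set $w_{i,i}=1$ and $w_{j,i}=0$ for all $j\neq i$. This influence graph lies inside the network externalities model since the normalization $\sum_j w_{j,i} = w_{i,i} = 1$ is satisfied for every $i$.

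Next, I would verify that under this construction, for any partition $P=\langle P_1,\ldots,P_n\rangle$ and any bijective allocation $\mathcal{A}$, agent $i$'s utility collapses to $\util_i(\mathcal{A}) = \valu\bigl(P_{\mathcal{A}^{-1}(i)}\bigr)$, since all cross-influences vanish. Hence $\util_i(\worst_i(P)) = \min_j \valu(P_j)$, and therefore
$$
\EMMS_i \;=\; \max_{P \in \Pi}\, \min_{j}\, \valu(P_j),
$$
which is precisely the value of a maximin partition of $\items$ into $n$ bundles under $\valu$. In particular, $\OPT_i$ is exactly a maximin partition, so any algorithm computing $\OPT_i$ also solves maximin partition.

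Since the reduction runs in polynomial time and maximin partition is NP-hard by \cite{woeginger1997polynomial}, computing $\OPT_i$ — and deciding whether $\EMMS_i$ meets a given threshold — is NP-hard, which establishes the observation. There is no real obstacle here: the only point that deserves a brief sanity check is that the identity influence graph is admissible in the model (it is, by the normalization argument above) and that collapsing all externalities to the self-loop indeed recovers the classical setting, which is immediate from the additive form of $\util_i$.
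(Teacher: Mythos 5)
Your proposal is correct and matches the paper's intended argument: the paper gives no separate proof of this observation but points out in the surrounding text that a maximin partition is optimal when the influence vector is concentrated on a single weight equal to $1$, which is exactly the degenerate identity influence graph you construct. The only cosmetic difference is that you spell out the normalization check and the collapse of $\util_i(\worst_i(P))$ to $\min_j \valu(P_j)$, which the paper leaves implicit.
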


Woeginger \cite{woeginger1997polynomial} also showed that finding the maximin partition of a set of items without externalities admits a $\textsf{PTAS}$. However, their method does not directly extend to the case with externalities. To the best of our knowledge, finding an approximately optimal partition for an agent in the general externalities model has not been studied before. 

In the case of \emph{network externalities}, our model is easier to deal with. Since the utility of each agent is a convex combination of his \textcolor{magenta}{valuation}, finding the worst allocation $\worst_i(P)$ is trivial: consider an $n$-step allocation algorithm whose every step allocates the most valuable remaining bundle to a currently unallocated agent with the least effect on agent $i$. Hence,
\begin{equation}
\label{eqm}
\util_i(\worst_i(P)) = \sum_j {x_{i,j} \cdot \valu_i(P_j)}.
\end{equation}
Recall that $x_i$ (the influence vector of agent $i$)  is non-decreasing, and the bundles in $P$ are \textcolor{magenta}{sorted} in non-increasing order of their values for agent $i$. This property of the network externalities model allows us to approximate the value of $\EMMS_i$ with a constant ratio, using a simple greedy approach. On top of that, it is possible to infer relations between $\EMMS$ and some previously defined partitioning schemes.

Apart from the allocation of bundles, partitioning the items is another challenge to overcome. By definition, an optimal partition is a partition that maximizes Equation \eqref{eqm}. Finding an optimal partition for a given vector $x$ is in fact, a generalized form of partitioning problems that includes both maximin and minimax partitions. What happens if we partition the items by one of the famous partitioning schemes such as minimax or maximin? A maximin partition is a partition that maximizes the value of the minimum bundle. It is easy to see that a maximin partition is optimal when $x=[ 1, 0, \ldots, 0 ]$. Likewise, minimax partition is a partition that minimizes the value of the maximum bundle, and it is the optimal partition when $x=[ \frac{1}{n-1}, \frac{1}{n-1}, \ldots, \frac{1}{n-1}, 0] $. Another example is the leximin partition. A leximin partition first maximizes the minimum bundle, and subject to this constraint, maximizes the second least valued bundle, and so on. Real-world applications of leximin allocations are recently studied by Kurokawa, Procaccia and Shah \cite{kurokawa2015leximin}. 
For a small enough $\epsilon$, the optimal partition for vector $x=[ \frac{1-\epsilon}{1-\epsilon^n}, \frac{\epsilon-\epsilon^2}{1-\epsilon^n}, \frac{\epsilon^2-\epsilon^3}{1-\epsilon^n},..., \frac{\epsilon^{n-1}- \epsilon^n}{1-\epsilon^n}]$ is a leximin partition. 
For example, in Figure \ref{minimaximin_image}, if we choose $x$ to be $[ 1, 0, 0 ]$, maximin is the optimal partition, and if we choose $x$ to be $[\frac{1}{2}, \frac{1}{2}, 0 ]$, minimax is optimal.

Since none of these partitioning schemes are always optimal, approximating either of them is not desirable. However, the well-known greedy algorithm $\LPT$ \footnote{Longest processing time} provides a partition $L_i = \langle L_{i,1},L_{i,2},\ldots,L_{i,n} \rangle$ for agent $i$, such that $\util_i(\worst_i(L_i))$ is a constant approximation of $\EMMS_i$.  $\LPT$ is a simple greedy algorithm in the context of job scheduling. This algorithm starts with $n$ empty bundles and iteratively puts the most valuable remaining item into the bundle with the minimum total value. It is proved that the partition provided by $\LPT$ is a constant approximation for both maximin and minimax partitions \cite{graham1969bounds,deuermeyer1982scheduling}.

\begin{theorem}
\label{sapx}
For the network externalities model, we have 
\begin{align}
\util_i(\worst_i(L_i)) &\geq  
\EMMS_i/2. \label{eqy}
\end{align}
\end{theorem}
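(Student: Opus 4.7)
The plan is to rewrite $\util_i(\worst_i(\cdot))$ via Abel's summation by parts and then case-split on the structure of the LPT partition $L_i$. Set $\Delta_k = x_{i,k} - x_{i,k-1} \ge 0$ (with $x_{i,0}=0$) and $T_k(P) = \sum_{j=k}^n \valu_i(P_j)$, i.e., the sum of the $n-k+1$ bundles of $P$ with smallest values under the sort of Equation~\eqref{eqm}. Summation by parts converts~\eqref{eqm} into $\util_i(\worst_i(P)) = \sum_{k=1}^n \Delta_k\,T_k(P)$, so because $\Delta_k\ge 0$ the target inequality reduces to a comparison between $T_k(L_i)$ and $T_k(O_i)$. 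I would also record two global facts that will be used throughout: $(a)$ the Chebyshev (rearrangement) sum inequality applied to the anti-sorted vectors $x_i$ and $(\valu_i(P_j))_j$ gives $\util_i(\worst_i(P)) \le V/n$ for every partition $P$, where $V = \valu_i(\items)$; in particular $\EMMS_i \le V/n$. And $(b)$ the standard Graham argument for $\LPT$ shows that every bundle satisfies $\valu_i(L_{i,j}) \ge \valu_i(L_{i,1}) - v^*$, where $v^*$ is the value of the last item placed in the largest bundle of $L_i$.

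Case A: the largest bundle of $L_i$ contains at least two items. Since $\LPT$ processes items in decreasing order of value, every item in that bundle has value at least $v^*$, hence $v^* \le \valu_i(L_{i,1})/2$, and fact~$(b)$ yields $\valu_i(L_{i,n}) \ge \valu_i(L_{i,1})/2$. Because the maximum of any partition dominates its average, $\valu_i(L_{i,1}) \ge V/n$, so $\valu_i(L_{i,n}) \ge V/(2n)$. The trivial bound $\util_i(\worst_i(L_i)) \ge \valu_i(L_{i,n})$ (which uses $\valu_i(L_{i,j}) \ge \valu_i(L_{i,n})$ for all $j$ together with $\sum_j x_{i,j}=1$), combined with fact~$(a)$, then gives $\util_i(\worst_i(L_i)) \ge V/(2n) \ge \EMMS_i/2$, and this case is done.

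Case B: the largest bundle of $L_i$ is the singleton $\{v_{\max}\}$, so $\valu_i(L_{i,1}) = v_{\max}$. I would proceed by induction on $n$. Since this singleton is never the smallest bundle while $\LPT$ runs, the remaining bundles $L_{i,2},\dots,L_{i,n}$ are exactly the $\LPT$ output on the sub-instance with item set $\items\setminus\{v_{\max}\}$, $n-1$ bundles, and renormalized weights $x'_j = x_{i,j+1}/(1-x_{i,1})$; let $E'$ denote the EMMS value of that sub-instance. The induction hypothesis supplies $\sum_{j=2}^n x_{i,j}\valu_i(L_{i,j}) \ge (1-x_{i,1})\,E'/2$. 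The main obstacle is the structural claim that in Case~B (where one can check $v_{\max} \ge V/n$), some EMMS-optimal partition of the original instance places $v_{\max}$ alone in a bundle; granting this, we get $\EMMS_i = x_{i,1}v_{\max} + (1-x_{i,1})E'$, and the algebra then closes cleanly via
\[
\util_i(\worst_i(L_i)) = x_{i,1}v_{\max} + \sum_{j\ge 2} x_{i,j}\valu_i(L_{i,j}) \ge x_{i,1}v_{\max} + (1-x_{i,1})E'/2 = \EMMS_i/2 + x_{i,1}v_{\max}/2 \ge \EMMS_i/2.
\]
I plan to establish the structural claim by a swap argument: starting from any EMMS-optimal partition in which $v_{\max}$ shares a bundle with other items, iteratively transfer a non-$v_{\max}$ item from that bundle into the currently smallest bundle, and show using the non-decreasing sortedness of $x_{i,\cdot}$ and an explicit rearrangement calculation that each such swap does not decrease $\util_i(\worst_i(\cdot))$, terminating with $v_{\max}$ isolated. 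Carefully handling the re-sorting of bundles that a swap can induce is the most delicate part of the proof.
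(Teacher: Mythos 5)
Your Case A is sound and is essentially the paper's Claim \ref{nohuge}: when the largest $\LPT$ bundle has two or more items, the Graham bound gives $\valu_i(L_{i,n})\ge \valu_i(L_{i,1})/2\ge \valu_i(\items)/(2n)$, and combining with $\EMMS_i\le \AVG_i(\items)=\valu_i(\items)/n$ (your Chebyshev fact, the paper's Lemma \ref{cam}) closes that case. Your overall plan for Case B --- peel off the huge singleton, recurse on the sub-instance with renormalized weights --- also parallels the paper's induction, which removes all huge items at once. The gap is in the one line that carries all the weight: the identity $\EMMS_i = x_{i,1}v_{\max} + (1-x_{i,1})E'$. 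Your structural claim only gives that some optimal partition \emph{isolates} $v_{\max}$; it does not tell you that $\{v_{\max}\}$ is the \emph{most valuable} bundle of $\OPT_i$, i.e.\ the one that receives the smallest weight $x_{i,1}$ in $\worst_i(\OPT_i)$. A nice partition may contain a bundle of many small items (each of value at most $\valu_i(\OPT_{i,n})$, so niceness is not violated) whose total exceeds $v_{\max}$; then $\{v_{\max}\}$ sits at some rank $k>1$ and is paired with $x_{i,k}>x_{i,1}$, and your decomposition breaks. Moreover, even if $\{v_{\max}\}$ were top-ranked, the restriction of $\OPT_i$ to $\items\setminus\{v_{\max}\}$ need not be an optimal partition of the sub-instance, so equality with $E'$ is unjustified in both directions.

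The repair is available and cheap, and it is exactly the paper's $\worst_i'$ device. First, note your algebra only needs the inequality $\EMMS_i \le x_{i,1}v_{\max} + (1-x_{i,1})E'$, since then
$x_{i,1}v_{\max} + (1-x_{i,1})E'/2 \ge \tfrac12\bigl(x_{i,1}v_{\max} + (1-x_{i,1})E'\bigr) \ge \EMMS_i/2$.
Second, this inequality follows from isolation alone: $\util_i(\worst_i(\OPT_i))$ is a minimum over all allocations of $\OPT_i$, hence is at most the utility of the \emph{constrained} allocation that sends $\{v_{\max}\}$ to the least-influential agent and allocates the remaining $n-1$ bundles worst-case among the remaining agents; that constrained utility equals $x_{i,1}v_{\max}$ plus $(1-x_{i,1})$ times the worst-allocation utility of a feasible sub-partition, which is at most $(1-x_{i,1})E'$. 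With that substitution (and a check that in Case B the bundle $\{v_{\max}\}$ never becomes the running minimum during $\LPT$, so $L_{i,2},\dots,L_{i,n}$ really are the $\LPT$ output of the sub-instance), your proof goes through; without it, the asserted identity is the missing --- and false as stated --- step.
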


To prove Theorem \ref{sapx}, we label some of the items as \textit{huge}. Huge items are those whose values are at least $\AVG_i(\items)$. 
Denote the set of huge items for agent $i$ by $\hug_i$. 


\begin{claim}
\label{nohuge}
For an instance with no huge items we have
$
\valu_i(L_{i,n}) \geq \AVG_i(\items)/2.
$
\end{claim}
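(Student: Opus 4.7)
The plan is to argue by contradiction: assume that $M := \valu_i(L_{i,n}) < \AVG_i(\items)/2$ and use the LPT rule together with the hypothesis that no item is huge to squeeze two incompatible bounds on the same quantity.

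First, I would identify a heavily loaded bundle. Since the partition $L_i$ has total value $\valu_i(\items) = n\AVG_i(\items)$, at least one bundle $B^*\in L_i$ satisfies $\valu_i(B^*)\geq \AVG_i(\items) > 2M$ by a simple averaging argument. Clearly $B^* \neq L_{i,n}$.

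Next, I would use LPT's greedy rule to get an upper bound on $B^*$'s pre-final value. Let $b^*$ be the item that LPT placed into $B^*$ last. At the moment just before $b^*$ is inserted, $B^*$ is the bundle of minimum value by the LPT rule, so in particular its value then is at most the value of $L_{i,n}$ at that same moment. Since bundles only grow during LPT, the value of $L_{i,n}$ at that moment is at most its final value $M$. This gives
\[
\valu_i(B^*) - \valu_i(b^*) \leq M,
\qquad \text{i.e.,} \qquad
\valu_i(B^*) \leq M + \valu_i(b^*).
\]
Combined with $\valu_i(B^*) \geq \AVG_i(\items)$ and the no-huge-item hypothesis $\valu_i(b^*) < \AVG_i(\items)$, this forces $\valu_i(b^*) \geq \AVG_i(\items)-M > \AVG_i(\items)/2 > M$, and also shows that $B^*$ cannot consist of $b^*$ alone (otherwise $\valu_i(B^*)=\valu_i(b^*)<\AVG_i(\items)$, contradicting $\valu_i(B^*)\geq\AVG_i(\items)$).

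Finally, I would derive the contradiction from a second, earlier item in $B^*$. Pick any item $b'\in B^*\setminus\{b^*\}$. Because $b^*$ is the last item placed into $B^*$ and LPT processes items in non-increasing order of value, $\valu_i(b')\geq \valu_i(b^*) > M$. But $b'$ already lies in $B^*$ when $b^*$ is about to be inserted, so the value of $B^*$ just before the insertion of $b^*$ is at least $\valu_i(b') > M$, contradicting the bound $\valu_i(B^*)-\valu_i(b^*)\leq M$ established above.

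The one step that needs the most care is the second one, where I invoke LPT's min-bundle rule to transport a bound on $B^*$'s pre-$b^*$ value over to the final value $M$ of $L_{i,n}$; everything else is bookkeeping. The essential role of the no-huge-item hypothesis is exactly to force $B^*$ to contain a second item besides $b^*$, which is what allows the final contradiction to fire.
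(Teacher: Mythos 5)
Your proof is correct: it uses exactly the same ingredients as the paper's argument (the most valuable bundle has value at least $\AVG_i(\items)$, hence contains at least two items since no item is huge; LPT inserts each item into the currently minimal bundle; items arrive in non-increasing order), merely run in contrapositive form rather than directly. The paper's version reaches the conclusion slightly more quickly by noting that the last item added to the heaviest bundle is worth at most half of that bundle, but the two arguments are essentially identical.
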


Since $L_{i,n}$ is the least valued bundle of $L_i$ for agent $i$, $\util_i(\worst_i(L_i)) \geq \valu_i(L_{i,n})$. Furthermore, By lemma \ref{cam}, $\AVG_i(\items) \geq \EMMS_i$.
Hence, when there is no huge item, \textcolor{magenta}{regarding Claim \ref{nohuge},} Inequality \eqref{eqy} holds. Thus, \textcolor{magenta}{to prove Theorem \ref{sapx},} it only suffices to consider the instances with huge items.
Note that, when there are huge items in $\items$, \textcolor{magenta}{$\valu_i(L_{i,n}) \geq \AVG_i(\items)/2$} does not necessarily hold. To cope with such a situation, we need to consider some properties for $\OPT_i$. 

\begin{definition}
\label{nicedef}
We call a partition $P$ \textit{nice} \textcolor{magenta}{for} agent $i$, if no item $\ite$ in some bundle $P_j$ exists, such that $\valu_i(P_j) > \valu_i(\{b\}) > \valu_i(P_n)$ (recall that $\valu_i(P_n) = \min_j{\valu_i(P_j)}$). 
\end{definition}

\begin{claim}
\label{subsetswitch}
For any partition $P$, there exists a nice partition $P'$, such that $\util_i(\worst_i(P)) \leq \util_i(\worst_i(P')).$
\end{claim}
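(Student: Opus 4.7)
My plan is to transform $P$ into a nice partition by a finite sequence of local swaps, each of which does not decrease $\util_i(\worst_i(\cdot))$. If $P$ is not nice I would pick a bundle $P_j$ containing an item $b$ with $\valu_i(P_j) > \valu_i(\{b\}) > \valu_i(P_n)$, and perform the swap that replaces $P_j$ by $(P_j \setminus \{b\}) \cup P_n$ and $P_n$ by $\{b\}$, leaving all other bundles unchanged. Writing $V_j = \valu_i(P_j)$, $V_n = \valu_i(P_n)$, and $u = \valu_i(\{b\})$, the two modified bundles have values $V_j + V_n - u$ and $u$, both lying strictly in the open interval $(V_n, V_j)$ and summing to $V_j + V_n$.

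My first task is to check that the swap cannot decrease $\util_i(\worst_i(\cdot))$. By Equation~\eqref{eqm}, $\util_i(\worst_i(P)) = \sum_k x_{i,k}\, \valu_i(P_k)$, where by the paper's sorting convention $\valu_i(P_1) \geq \cdots \geq \valu_i(P_n)$ and $x_i$ is non-decreasing. The unordered pair $\{V_j + V_n - u,\, u\}$ is a convex combination of $\{V_j, V_n\}$ and its swap $\{V_n, V_j\}$, so the new multiset of bundle values is majorized by the old one. A short summation-by-parts argument then yields $\sum_k x_{i,k}\, \valu_i(P'_k) \geq \sum_k x_{i,k}\, \valu_i(P_k)$: rewriting each inner product in terms of the partial sums of the sorted bundle values and using that $x_{i,k} - x_{i,k+1} \leq 0$, the difference becomes a sum of non-negative products of two non-positive factors.

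My second task is termination. Both modified bundle values strictly exceed $V_n$, while all other bundle values are unchanged, so compared to the old sorted non-decreasing vector of bundle values, the new one is lex-strictly larger: either the minimum value strictly increases, or the multiplicity of the current minimum drops by one. Since the set of partitions of $\items$ into $n$ bundles is finite, iterating the swap must terminate at a nice partition $P'$, and chaining the inequality from the first step gives $\util_i(\worst_i(P')) \geq \util_i(\worst_i(P))$, as required.

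The main obstacle will be the fact that the swap can \emph{overshoot}: when $u > (V_j + V_n)/2$, the new value of the modified $P_j$ falls below the new value of the modified $P_n$, so the two bundles flip in the sorted order. The majorization viewpoint above sidesteps this, since it works with the multiset of bundle values rather than their positions; the only structural fact it needs is that both new values lie inside $[V_n, V_j]$, which is guaranteed by the hypothesis $V_n < u < V_j$.
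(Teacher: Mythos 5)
Your proof is correct and follows essentially the same route as the paper's: the identical local swap (replacing $P_j$ and $P_n$ by $\{b\}$ and $P_j \cup P_n \setminus \{b\}$) together with the identical termination potential (either the minimum bundle value strictly increases or the multiplicity of minimum-value bundles drops). The only difference is cosmetic: where the paper bounds the change in $\util_i(\worst_i(\cdot))$ directly via the new positions $l \leq l'$ of the two modified bundles, you justify the same monotonicity via majorization of the value multiset and Abel summation against the non-decreasing weights $x_{i,k}$ --- a somewhat more careful handling of the re-sorting, but the same underlying inequality.
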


In the rest of this paper, we focus on the optimal partitions which are nice. 
Furthermore, it can be easily observed that $L_i$ is also nice. 

In a nice partition $P$ regarding agent $i$, any bundle $P_j$ containing a huge item $b \in \hug_i$ has no other item. Otherwise, $\valu_i(P_j) > \valu_i(\{b\}) \geq \AVG_i(\items)$. Since $\AVG_i(\items) > \valu_i(P_n)$, this is in contradiction with the niceness of $P$. This fact about nice partitions \textcolor{magenta}{(including $L_i$ and $\OPT_i$)} allows us to deal with huge items.
 We are now ready to prove Theorem \ref{sapx}.

\vspace{0.3cm}
\textbf{Proof of Theorem \ref{sapx}.}
We use induction on the number of agents. For $n = 1$, the statement is trivial. 
For $n>1$, if \textcolor{magenta}{$\valu_i(L_{i,n}) \geq \AVG_i(\items)/2$} holds, we have 
$
\util_i(\worst_i(L_i)) \geq \EMMS_i/2.
$
Thus, when $\items$ contains no huge item, by Claim \ref{nohuge}, $L_i$ is a $1/2$-approximation of $\OPT_i$. 
Therefore, it only remains to consider the case that $n>1$ and $\valu_i(L_{i,n})  < \AVG_i(\items)/2$. \textcolor{magenta}{For this case, we know that $\items$ contains at least one huge item.}

Let $B_{\hug_i}(\OPT_i)$ and $B_{\hug_i}(L_i)$ be the set of the bundles containing huge items in $\OPT_i$ and $L_i$  respectively. We know that $B_{\hug_i}(\OPT_i) = B_{\hug_i}(L_i)$, as the bundles in $B_{\hug_i}(O_i)$ and $B_{\hug_i}(L_i)$ do not contain anything but huge items, and  each huge item is the only item within its bundle \textcolor{magenta}{(recall that both $O_i$ and $L_i$ are nice)}. In addition, $B_{\hug_i}(L_i)$ are the $|\hug_i|$ \textcolor{magenta}{most valuable} bundles in $L_i$. Otherwise, a very similar argument as in the proof of Claim \ref{nohuge} yields $\valu_i(L_{i,n}) \geq \AVG_i(\items)/2$.

  Let $\worst_i'(\OPT_i)$ be the worst possible allocation of $\OPT_i$ with the constraint that allocates $|\hug_i|$ huge items to the $|\hug_i|$ agents with the least influence on agent $i$. By definition, $\util_i(\worst_i'(\OPT_i)) \geq \util_i(\worst_i(\OPT_i))$. Moreover, in both $\worst_i'(\OPT_i)$ and $\worst_i(L_i)$, huge items are allocated to the same set of agents, say $\agents_{\hug_i}$. Now, consider the sub-instance with items $\items \setminus B_{\hug_i}(\OPT_i)$ and agents $\agents \setminus \agents_{\hug_i}$. Note that since $\valu_i(L_{i,n})  < \AVG_i(\items)/2$, the set $\items \setminus B_{\hug_i}(\OPT_i)$ (and hence, $\agents \setminus \agents_{\hug_i}$) is non-empty. 

By the induction hypothesis, for this sub-instance, Inequality \eqref{eqy} holds. Now, adding huge items and their corresponding agents back, increases the utility of agent $i$ by the same \textcolor{magenta}{amount} for both of the allocations. Thus, 
$
\util_i(\worst_i(L_i)) \ge 1/2 \cdot \util_i(\worst_i'(\OPT_i))\ge 1/2 \cdot \util_i(\worst_i(\OPT_i)).
$
\qed
\color{magenta}
\color{black}

\section{$\bf{\alpha}$-$\bf{\EMMS}$ Allocation Problem}

In this section, we focus on the allocations that guarantee every agent $i$ an approximation of $\EMMS_i$. We start this section by comparing $\EMMS_i$ and $\MMS_i$. Note that the value of $\MMS$ is determined by the least valued bundle in the maximin partition,  while $\EMMS_i = \util_i(\worst_i(\OPT_i)) \geq \util_i(\worst_i(Q_i))$, where $Q_i$ is a maximin partition of agent $i$. In addition, $\util_i(\worst_i(Q_i))$ is a convex combination of the values of several bundles with a value of at least $\MMS_i$, and hence $\EMMS_i \geq \MMS_i$ always holds. Furthermore, Observation \ref{gap} states that the gap between $\EMMS_i$ and $\MMS_i$ could be unbounded even for the instances with $3$ agents.

\begin{observation}
\label{gap}
For any $c\geq1$, there is an instance with $3$ agents, where $\EMMS_1 > c \cdot \MMS_1$.
\end{observation}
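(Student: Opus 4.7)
The plan is to exhibit an explicit three-agent instance in the network externalities model in which $\MMS_1$ stays essentially zero while $\EMMS_1$ grows without bound. The intuition is that $\MMS_1$ is pinned down by the value of the \emph{minimum} bundle in the maximin partition, whereas $\EMMS_1$, by Equation~\eqref{eqm}, is a weighted average across \emph{all} bundles. So if I concentrate agent $1$'s value on a single item and give the remaining agents any nonzero incoming influence on agent $1$, the heavy item contributes to $\EMMS_1$ through its small weight, but it can only live in one bundle and therefore does nothing for $\MMS_1$.

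Concretely, I would take $\items = \{b_1, b_2, b_3\}$ with $\valu_1(\{b_1\}) = M$ for a large parameter $M$ and $\valu_1(\{b_2\}) = \valu_1(\{b_3\}) = \delta$ for a tiny fixed $\delta > 0$. For the influence graph I would fix a small $\epsilon > 0$ and set $w_{1,1} = 1 - 2\epsilon$, $w_{2,1} = w_{3,1} = \epsilon$, so the column of incoming weights at agent $1$ sums to $1$ and the influence vector is $x_1 = [\epsilon, \epsilon, 1-2\epsilon]$. The remaining edges (incoming to agents $2$ and $3$, and agent $1$'s valuations on other agents, which do not enter the computation of $\EMMS_1$ or $\MMS_1$) can be set arbitrarily subject to the normalization $\sum_j w_{j,i} = 1$, e.g.\ $w_{2,2} = w_{3,3} = 1$.

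Next I would compute both quantities. Since there are only three items and three bundles, every partition has at least one bundle of value at most $\delta$, so $\MMS_1 = \delta$, attained by $\langle\{b_1\},\{b_2\},\{b_3\}\rangle$. For $\EMMS_1$, applying Equation~\eqref{eqm} to that same partition (with $\{b_1\}$ the largest bundle so it is matched against the smallest weight $\epsilon$) gives
\[
\util_1(\worst_1(P)) \;=\; \epsilon \cdot M \;+\; \epsilon \cdot \delta \;+\; (1-2\epsilon)\cdot \delta \;\geq\; \epsilon M,
\]
so $\EMMS_1 \geq \epsilon M$. Taking $M > c \delta / \epsilon$ then yields $\EMMS_1 > c \cdot \MMS_1$, as required.

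I do not expect any real obstacle: the construction has so few items that all partitions can be enumerated, and the only thing to verify is the lower bound on $\EMMS_1$ from a single explicit partition, together with the trivial upper bound on $\MMS_1$ coming from the pigeonhole on three items in three bundles. Note also that the instance can be made $(1-2\epsilon)$-self-reliant with $\epsilon$ arbitrarily small, emphasizing the point made after Definition~\ref{def1} that near-self-reliance does not license ignoring the externalities.
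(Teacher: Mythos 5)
Your construction is correct and is essentially the paper's own: the same three-agent influence graph with $w_{1,1}=1-2\epsilon$ and $w_{2,1}=w_{3,1}=\epsilon$, a singleton partition whose heavy item gets matched to the weight $\epsilon$ in the worst allocation, and the pigeonhole bound on $\MMS_1$ (the paper merely swaps which items are heavy, using two items of value $c/\epsilon$ and one of value $1$). The only cosmetic discrepancy is that the paper charges the self-weight $1-2\epsilon$ when evaluating $\MMS_1$, whereas you take $\MMS_1=\delta$; either convention only shrinks $\MMS_1$ or leaves it at $\delta$, so your inequality goes through unchanged.
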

   
It is worth to mention that  the proof of Observation \ref{gap} highlights that even for very few externalities, the gap between $\EMMS$ and $\MMS$ \textcolor{magenta}{might be large}. Thus, the external effects are not negligible even if the impacts of the parties on each other are small.
Our main result is stated in Theorem \ref{mainres}. We show that for the network externalities model when all the agents are $\alpha$-self-reliant, an $\alpha/2$-$\EMMS$ allocation always exists.

\begin{theorem}
\label{mainres}
Let $\mathbb C$ be an instance where for every agent $i$, $w_{i,i} \geq \alpha$. Then, $\mathbb C$ admits an $\alpha/2$-$\EMMS$ allocation. 
\end{theorem}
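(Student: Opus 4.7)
The plan is to reduce Theorem~\ref{mainres} to constructing a one-to-one assignment of bundles to agents in which each agent's assigned bundle is, in their own valuation, worth at least half their extended-maximin-share; the $\alpha$-self-reliance then converts this into an $\alpha/2$-$\EMMS$ guarantee automatically. Concretely, if agent $i$ receives a bundle $B_i$ with $\valu_i(B_i) \geq \EMMS_i/2$, then $i$'s utility from their own bundle alone is $w_{i,i}\,\valu_i(B_i) \geq \alpha \cdot \EMMS_i/2$, and every externality term $w_{j,i}\,\valu_i(B_j)$ only adds nonnegatively to this. So it suffices to produce a partition $\langle B_1,\ldots,B_n \rangle$ of $\items$ together with a bijection $i \mapsto B_i$ meeting $\valu_i(B_i) \geq \EMMS_i/2$ for every agent $i$.

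I will construct this partition iteratively. Maintain for each still-unallocated agent $i$ an expectation $e_i$, initialized to $\EMMS_i/2$. At each step, pick some unallocated agent $i$ and a bundle $B$ drawn from the currently remaining items such that $\valu_i(B) \geq e_i$; assign $B$ to $i$, remove both, and update the expectations of the surviving agents. The existence of a valid agent-bundle pair at each step is the heart of the proof. For any surviving agent $i$, consider their LPT partition $L_i$ over the remaining items: by Theorem~\ref{sapx}, $\util_i(\worst_i(L_i)) \geq \EMMS_i^{\text{rem}}/2$, where $\EMMS_i^{\text{rem}}$ denotes the extended-maximin-share recomputed on the remaining items. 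Since that worst-case utility is the convex combination $\sum_j x_{i,j}\,\valu_i(L_{i,j})$ dominated by the largest bundle, we automatically obtain $\valu_i(L_{i,1}) \geq \EMMS_i^{\text{rem}}/2$. Thus, as long as the invariant $e_i \leq \EMMS_i^{\text{rem}}/2$ holds for at least one surviving agent, that agent together with $B = L_{i,1}$ supplies a valid pair.

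The main obstacle is designing the update rule for $e_j$ and proving that the invariant is preserved after each removal. When a bundle $B$ is assigned to agent $i$, the quantity $\EMMS_j^{\text{rem}}$ for every surviving agent $j$ may drop in an intricate way, because $j$'s optimal partition must be restructured around the removed items. My plan is to exploit the niceness property (Claim~\ref{subsetswitch}) together with the monotonicity of the influence vector $x_j$ to bound this drop: the items in $B$ intersect only a few bundles of $\OPT_j$, and by pairing those bundles against the largest-influence positions one can bound the change in $\util_j(\worst_j(\cdot))$ in terms of the externality $w_{i,j}\,\valu_j(B)$ that agent $j$ has already accrued from the assignment to $i$. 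Setting the update $e_j \leftarrow e_j - w_{i,j}\,\valu_j(B)/\alpha$ (or an analogous quantity) will preserve the invariant, absorbing the drop in $\EMMS$ into the externality credit $j$ has already banked, while simultaneously ensuring that the accumulated self-plus-externality utility of $j$ meets $\alpha/2 \cdot \EMMS_j$ once $j$ is finally matched. A cascading induction on the number of unassigned agents then closes the argument.
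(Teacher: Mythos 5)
Your opening reduction does not go through: it is not true in general that the items can be partitioned so that every agent $i$ receives a bundle worth $\EMMS_i/2$ in his own valuation. Take the instance from Observation~\ref{gap} and make all three agents symmetric (each values $\ite_1$ at $1$ and $\ite_2,\ite_3$ at $c/\epsilon$, with $w_{i,i}=1-2\epsilon$ and $w_{j,i}=\epsilon$ for $j\neq i$). Then $\EMMS_i = 1-2\epsilon+2c$ for every $i$, so $\EMMS_i/2 > c$, yet in any partition into three nonempty bundles some agent receives exactly $\{\ite_1\}$, worth only $1$. The theorem still holds there precisely because that agent is compensated by the externalities $\epsilon\cdot(c/\epsilon)$ from each of the two large bundles: the externality terms are not an optional bonus but the essential ingredient for agents who end up with low-value bundles. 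This is why the paper's algorithm lets the bar on agent $i$'s own bundle drop from $v_{i,1}/2$ down to $v_{i,\ell_i}/2$ as the expectation level $\ell_i$ rises, and covers the shortfall with banked externality credit via Observation~\ref{obsv1}.

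Your third paragraph partially retracts the reduction by reintroducing an externality-credit update $e_j \leftarrow e_j - w_{i,j}\valu_j(B)/\alpha$, which is closer in spirit to the paper. But the step you yourself flag as ``the main obstacle'' --- showing that the drop in $\EMMS_j^{\mathrm{rem}}$ upon removing a bundle and an agent is absorbed by the credited externality --- is exactly the hard part, and it is left as a plan rather than an argument; it is also unclear that $\EMMS_j^{\mathrm{rem}}$, recomputed over $n-1$ agents with a changed influence vector, is even the right potential function. The paper avoids recomputing $\EMMS$ on sub-instances altogether: it fixes the optimal partition $O_i$ once and maintains the external-satisfaction property, grouping the already-satisfied agents into sets $N_{i,j}$ whose allocated items are worth between $v_{i,j}/2$ and $v_{i,j}$ to agent $i$. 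The lower bounds yield the externality credit $\sum_{k<\ell_i} x_{i,k}v_{i,k}/2$ (Lemma~\ref{satisfaction}), the upper bounds give the feasibility argument that enough items always remain to meet every surviving agent's expectation, and keeping both simultaneously requires the compatible-set surgery of Lemma~\ref{tweak}. None of this machinery has a counterpart in your sketch, so as written the proposal does not constitute a proof.
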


In the rest of this section, we prove Theorem \ref{mainres} by proposing an $\alpha/2$-$\EMMS$ allocation algorithm for the network externalities model with $\alpha$-self-reliant agents. For brevity, we name our algorithm \emph{Bundle Claiming } algorithm ($\textsf{BC}$).

\subsection{Bundle  Claiming  Algorithm ($\textsf{BC}$)}
In this section, we present the ideas and a general description of \textcolor{magenta}{Bundle Claiming} algorithm. First, let us review the definition of $\EMMS_i$. With abuse of notations, we suppose that $v_i$ is a vector representing the values of the bundles in the optimal partition of agent $i$, i.e., ${v}_{i,j} = \valu_i(O_{i,j})$. Recall that the bundles in $O_i$ are sorted by their decreasing values for agent $i$. Hence, for all $j<n$, we have $v_{i,j}\geq v_{i,j+1}$.  Furthermore, by definition, we have
$
\EMMS_i  	=\sum_j x_{i,j}v_{i,j}. 
$

\begin{observation}
\label{obsv1}
For every $k$, we have
$ \sum_{j\geq k}  x_{i,j}  v_{i,j} \leq v_{i,k}.$
\end{observation}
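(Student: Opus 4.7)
The plan is to prove this by a direct two-step upper bound that exploits the sorting conventions together with the normalization $\sum_{j} x_{i,j}=1$ and non-negativity of the $x_{i,j}$'s.

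First I would recall the conventions set up in the paper: the components of the influence vector are listed in non-decreasing order, i.e., $x_{i,1}\leq x_{i,2}\leq\cdots\leq x_{i,n}$, they are all non-negative (being edge weights of the influence graph), and because the weights incoming to agent $i$ are normalized we have $\sum_{j=1}^{n} x_{i,j} = 1$. On the partition side, the bundles of $O_i$ are sorted so that $v_{i,1}\geq v_{i,2}\geq \cdots \geq v_{i,n} \geq 0$. Both facts are immediate from the definitions in Section~\ref{model} and the convention adopted just before the allocation problem is defined.

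Next I would perform the bound in two elementary steps. Fix any $k\in\{1,\dots,n\}$. For every $j\geq k$, the monotonicity of $v_i$ gives $v_{i,j}\leq v_{i,k}$. Multiplying by the non-negative scalar $x_{i,j}$ and summing over $j\geq k$ yields
\[
\sum_{j\geq k} x_{i,j}\, v_{i,j} \;\leq\; v_{i,k}\sum_{j\geq k} x_{i,j}.
\]
The remaining factor is controlled by normalization: since the $x_{i,j}$'s are non-negative and sum to $1$ over all $j$, we have $\sum_{j\geq k} x_{i,j}\leq 1$. Combined with $v_{i,k}\geq 0$, this gives $\sum_{j\geq k} x_{i,j}\, v_{i,j}\leq v_{i,k}$, which is the claim.

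There is no real obstacle here; the observation is a one-line consequence of the two sorting conventions plus normalization. The only small subtlety worth stating explicitly is that the inequality crucially uses the pairing of a \emph{decreasing} value sequence $v_{i,\cdot}$ with a \emph{non-decreasing} influence sequence $x_{i,\cdot}$, together with $v_{i,k}\geq 0$, so that both steps of the bound have the correct sign. This observation will presumably be invoked repeatedly in the analysis of the Bundle Claiming algorithm to argue that once bundles of value at least $v_{i,k}$ have been removed, the residual contribution of an agent's tail to $\EMMS_i$ is at most $v_{i,k}$.
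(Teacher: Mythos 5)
Your proof is correct and matches the paper's own justification, which likewise bounds each $v_{i,j}$ by $v_{i,k}$ for $j\geq k$ and then uses $\sum_{j\geq k} x_{i,j}\leq 1$ from the normalization of the influence weights. The paper states this in one line immediately after the observation; your write-up is just a more explicit version of the same argument.
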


For example, in an instance with $n=6 $, for $k=4$,  Observation  \ref{obsv1} yields $v_{i,4} \geq v_{i,4} x_{i,4} + v_{i,5} x_{i,5} + v_{i,6}x_{i,6}$. Observation \ref{obsv1} is a direct result of the following two facts:  first, for all $j>k$, we have $v_{i,k}\geq v_{i,j}$ and second, $\sum_{j>k} x_{i,j} \leq 1$.


\begin{definition}
\label{expectation}
For every agent $i$, we define $\ell_i$ to be the expectation level of agent $i$. Agent $i$ with expectation level $\ell_i$, has an expectation value of $v_{i,\ell_i}/2$.
\end{definition}

In the beginning of the algorithm, the expectation level of all the agents are set to $1$. Our algorithm consists of $n$ steps. In each step, we find a bundle $B$ with the minimum number of items that meets the expectation of at least one agent. Bundle $B$ meets the expectation of agent $i$, if $\valu_i(B) \geq v_{i,\ell_i}/2$. We allocate $B$ to one of the agents whose expectation is met (we say this agent is satisfied).  Next, we update the expectation levels of the remaining agents.
The updating process is a fairly complex process which we precisely describe in Section \ref{ana}. Roughly speaking, we update the expectation levels in a way that the following property holds during the algorithm:

\vspace{0.3cm}
\begin{center}
\begin{minipage}{0.85\textwidth}
\textbf{External-satisfaction property:} Let $\cal S$ be the set of currently satisfied agents. For every remaining agent $i$ with expectation level $\ell_i$,  there is a partition of the agents in $\cal S$ into $\ell_i$ subsets, namely $N_{i,1},N_{i,2},\ldots,N_{i,\ell_i-1},N_{i,F}$, such that for all $1 \leq j <\ell_i$, the total set of items allocated to the agents in $N_{i,j}$ is worth at least $v_{i,j}/2$ and at most $v_{i,j}$ to agent $i$, and the total set of items allocated to the agents in $N_{i,F}$ is worth less than $v_{i,\ell_i}/2$ to agent $i$.
\end{minipage}
\end{center}
\vspace{0.3cm}

Note that in the updating process, $\ell_i$ may increase by more than one unit. However, for every remaining agent $i$, $\ell_i \leq n$ must also hold. As we show in Section \ref{ana}, during our algorithm, $\ell_i \leq n$ always holds for every agent $i$.
We use the lower-bounds in this property to show that if the external-satisfaction property holds for agent $i$, total amount of externalities incurred by the satisfied agents is at least $\sum_{k < \ell_i} v_{i,k} x_{i,k}/2.$ 
The fact that $\EMMS_i$ is calculated with regard to the worst allocation of $\OPT_i$ \textcolor{magenta}{is the key to} prove this inequality.

Consider one step of the algorithm and suppose that a set $B$ of items is allocated to agent $i$. Since $B$ met the expectation of agent $i$, $\valu_i(B) \geq v_{i,\ell_i}/2$. Furthermore, the utility that agent $i$ gained through the externalities of the satisfied agents is at least $\sum_{k < \ell_i} v_{i,k} x_{i,k}/2$. Assuming that agent $i$ is $\alpha$-self-reliant, his utility is at least
\begin{align}
&\sum_{k < \ell_i} v_{i,k} x_{i,k}/2 + \alpha  v_{i,\ell_i}/2 \nonumber \\
&\geq \sum_{k < \ell_i} v_{i,k} x_{i,k}/2 + \alpha/2 \sum_{k \geq \ell_i} v_{i,k} x_{i,k} &\mbox{(Observation \ref{obsv1})} \nonumber \\
&\geq \alpha/2 \sum_{k} v_{i,k} x_{i,k} &(\alpha \leq 1 )\nonumber\\
&= \alpha/2\EMMS_i.   \label{more}& 
\end{align}

Inequality \eqref{more} ensures that the items allocated to agent $i$ satisfy him.
Furthermore, we use the upper bounds in the external-satisfaction property to prove that the algorithm satisfies all the agents. To show this, it only suffices to prove that in each step of the algorithm there are enough items to meet the expectation of the remaining agents. Consider agent $i$ which has not satisfied yet. The value of the items allocated to the satisfied agents not in $N_{i,F}$ is at most $\sum_{j < \ell_i} v_{i,j}$. Hence, the total value of the remaining items plus the items allocated to the agents in $N_{i,F}$ is at least $\sum_{ j \geq \ell_i } v_{i,j}$. On the other hand, the value of the items allocated to the agents in $N_{i,F}$ is less than $v_{i,\ell_i}/2$. Thus, the value of the remaining items is at least $v_{i,\ell_i}/2$ which is enough to meet the expectation of agent $i$. As said before, our algorithm maintains the property that $\ell_i \leq n$ for every remaining agent $i$.




\begin{algorithm}[t]
	\ForAll{$\agent_j \in \agents$}{
		$\ell_j \gets 1$ \Comment{Initializing expectation levels} \\
	}
	\While{$\agents \neq \varnothing$}{
		\ForAll{$\agent_j \in \agents$ }{ 
			$\Gamma_j \gets $ Minimum sized subset of $\items$, s.t. $V_j(\Gamma_j) \geq 1/2 \cdot v_{j,\ell_j}$
		}
		Allocate the minimum sized $\Gamma_i$ to agent $i$. \Comment{Allocating} \\ 
		Remove agent $i$ from $\agents$, and $\Gamma_i$ from $\items$. \\
		\ForAll{$\agent_j \in \agents$} {
			Add agent $i$ to $N_{j,F}$. \\
			\While {$\valu_j(N_{j,F}) \geq 1/2 \cdot v_{j,l_j} $}{
				Update $M_j$. \Comment{Maintaining external-satisfaction (see Section \ref{ana})} \\
			}
		}
	}
	\caption{Bundle Claiming algorithm}
	\label{alg1}
\end{algorithm}

\color{magenta}
The details of the bundle claiming algorithm is demonstrated in Algorithm \ref{alg1}, which is a high level abstraction of the full algorithm and captures the overall sketch of the algorithm. In the next section, we show how to maintain the external-satisfaction property in the algorithm.

\color{black}
We end this section by showing that $\textsf{BC}$ can be implemented in polynomial time. The only part of the algorithm whose implementation in polynomial time is not trivial is when we want to find a minimum-sized set $B$ meeting the expectation of at least one agent.

\begin{observation} \label{minbun}
	The minimum-sized set that meets the expectation of at least one remaining agent can be found in polynomial time.
\end{observation}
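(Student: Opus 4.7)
The plan is to solve the problem per agent and then take the best one. Fix a remaining agent $j$ with expectation value $v_{j,\ell_j}/2$. I want the smallest set $\Gamma_j \subseteq \items$ with $V_j(\Gamma_j) \geq v_{j,\ell_j}/2$. Since $V_j$ is additive, the obvious greedy works: sort the items of $\items$ in non-increasing order of $V_j(\{b\})$, then scan the list and accumulate items into $\Gamma_j$ until the running sum first reaches $v_{j,\ell_j}/2$ (if the total value never reaches the threshold, declare $\Gamma_j = \varnothing$ infeasible for agent $j$). This takes $O(m \log m)$ time per agent.

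The correctness step is the only one requiring a short argument. I would observe that if $\Gamma_j$ is produced by the greedy above with $|\Gamma_j| = k$, then any set $T \subseteq \items$ with $|T| < k$ satisfies $V_j(T) < v_{j,\ell_j}/2$. Indeed, among all subsets of size at most $k-1$, the one maximizing $V_j$ is the top $k-1$ items in the sorted order, which by definition of the greedy did not reach the threshold. Hence $\Gamma_j$ is genuinely a minimum-cardinality subset meeting agent $j$'s expectation.

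Finally, I run the above procedure for every remaining agent $j \in \agents$ and output $\Gamma_i$ where $i \in \arg\min_{j \in \agents} |\Gamma_j|$ (restricted to agents for whom a feasible $\Gamma_j$ exists; if none exists, the subroutine fails, but this case does not arise inside the algorithm since $\items$ still contains enough value as argued for the external-satisfaction property). Since any set meeting some agent $j$'s expectation has size at least $|\Gamma_j|$, this $\Gamma_i$ is a minimum-sized subset meeting the expectation of at least one remaining agent. Total running time is $O(n m \log m)$, which is polynomial. The main (and essentially only) obstacle is the per-agent correctness of the greedy, which is immediate from additivity; no harder combinatorial argument is needed.
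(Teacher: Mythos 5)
Your proposal is correct and matches the paper's proof: for each remaining agent, sort items in non-increasing order of value, greedily accumulate until the expectation is met, and then take the smallest resulting bundle over all agents. You additionally spell out the (easy) optimality argument for the per-agent greedy, which the paper leaves implicit.
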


 It is worth to mention that the operations we apply in order to maintain the external-satisfaction property in the next section, are also trivially polynomial time. 
 
 Finally, using $L_i$ \footnote{Partitioning provided by $\LPT$ algorithm} instead of $\OPT_i$ in $\textsf{BC}$ results in an $\alpha/4$-$\EMMS$ allocation algorithm.

\begin{corollary}
\label{col1}
Let $\mathbb C$ be an instance where for every agent $i$, $w_{i,i} \geq \alpha$. Then, an $\alpha/4$-$\EMMS$ allocation for $\mathbb C$ can be found in polynomial time.
\end{corollary}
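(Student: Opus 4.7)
The plan is to observe that the Bundle Claiming algorithm from Theorem \ref{mainres}, although stated in terms of the (NP-hard to compute) optimal partition $\OPT_i$, does not actually exploit optimality anywhere in its analysis. What it really uses is a vector $v_i = (v_{i,1}, v_{i,2}, \ldots, v_{i,n})$ of bundle values arising from some partition, sorted in non-increasing order, together with the two consequences: (a) Observation \ref{obsv1}, which only needs the monotonicity of $v_i$ and $\sum_j x_{i,j} \le 1$, and (b) the identity $\sum_j x_{i,j} v_{i,j} = \util_i(\worst_i(P))$ for the partition $P$ whose bundle values are $v_i$. Both hold for any partition, not just $\OPT_i$.

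The first step is therefore to replace $\OPT_i$ by the LPT partition $L_i$ throughout the algorithm: for each agent $i$, set $v_{i,j} := \valu_i(L_{i,j})$, where the bundles of $L_i$ are re-indexed in non-increasing order of value for agent $i$. Feed these values as the ``target profile'' that drives the expectation levels $\ell_i$ and the external-satisfaction invariant in $\textsf{BC}$. Re-running the analysis of Theorem \ref{mainres} verbatim with this substitution, the bound \eqref{more} becomes
\[
\util_i(\text{items assigned to }i) \;\geq\; \tfrac{\alpha}{2}\sum_k v_{i,k}\, x_{i,k} \;=\; \tfrac{\alpha}{2}\,\util_i(\worst_i(L_i)),
\]
so the algorithm delivers each agent at least half of his ``LPT-based EMMS'' value.

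The second step is to plug in Theorem \ref{sapx}, which tells us $\util_i(\worst_i(L_i)) \geq \EMMS_i/2$. Chaining the two inequalities gives a utility of at least $\tfrac{\alpha}{2}\cdot \tfrac{1}{2}\EMMS_i = \tfrac{\alpha}{4}\EMMS_i$ for every agent $i$, which is the desired guarantee.

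For the polynomial-time claim, I would note three things: LPT itself runs in polynomial time and produces $L_i$ together with all the values $v_{i,j}$; by Observation \ref{minbun}, each iteration of $\textsf{BC}$ selects a minimum-sized expectation-meeting bundle in polynomial time; and the external-satisfaction bookkeeping from Section \ref{ana} is a polynomial-time update per step, as the authors already remark. Since $\textsf{BC}$ halts after $n$ steps, the overall procedure is polynomial. The main subtlety, and the only thing worth checking carefully, is that the analysis of $\textsf{BC}$ genuinely treats the input profile $v_i$ as a black box of non-increasing values satisfying Observation \ref{obsv1}; once this is confirmed, the corollary follows immediately from composing the two approximation factors.
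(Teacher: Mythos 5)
Your proposal is correct and takes essentially the same route as the paper: the authors also obtain Corollary~\ref{col1} by running $\textsf{BC}$ with the $\LPT$ partition $L_i$ in place of $\OPT_i$ and composing the $\alpha/2$ guarantee of Theorem~\ref{mainres} with the $1/2$ factor of Theorem~\ref{sapx}, with polynomiality following from Observation~\ref{minbun}. Your explicit check that the $\textsf{BC}$ analysis only uses the sorted value profile $v_i$ (together with the niceness of $L_i$, which the paper notes separately) is precisely the implicit justification the authors rely on.
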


\subsection{Maintaining the External-satisfaction Property} \label{ana}
Throughout this section, we suppose that $\mathcal S$ is the set of satisfied agents. Furthermore, for each agent $i$ in $\mathcal S$, we denote the bundle allocated to him by $B_i$. We start by giving a detailed explanation of the updating process.
As mentioned in the previous section, the external-satisfaction condition must hold during the entire algorithm.
To maintain this property in the updating process, for every agent $i$, we define a mapping $M_i$ that represents the partitioning of $\cal S$ for agent $i$ (recall the definition of external-satisfaction).

\begin{definition}
For every agent $i$, we define $M_i: \mathcal S \rightarrow \{\OPT_{i,1},\OPT_{i,2}, \ldots, \OPT_{i,n}\} \cup \{F_i\}$ as a mapping that corresponds each satisfied agent to a bundle in the optimal partition of $O_{i}$ or to $F_i$.  Furthermore, we define $N_{i,j}$ as the set of agents that are mapped to $O_{i,j}$ in $M_i$ and $N_{i,F}$ as the set of agents mapped to $F_i$. During the algorithm, we say mapping $M_i$ is valid, if the following conditions hold:

\vspace{0.2cm}
\begin{enumerate}[label=(\roman*)]
\item $\forall j<\ell_i \qquad \sum_{k \in N_{i,j}} \valu_i(B_k) \geq v_{i,j}/2$
\item $\forall j<\ell_i \qquad \sum_{k \in N_{i,j}} \valu_i(B_k) \leq v_{i,j}$
\item \hspace{1.68cm} $\sum_{k \in N_{i,F}} \valu_i(B_k) < v_{i,\ell_i}/2$
\end{enumerate}
\vspace{0.2cm}
\end{definition}


During the algorithm, mapping $M_i$ must remain valid for every unsatisfied agent $i$.
In the beginning, ${\cal S} = \emptyset$ and for every agent $i$, $\ell_i=1$ and hence, $M_i$ is valid. In every step of the algorithm, we satisfy an agent $i$ by a bundle $B_i$ of items. Next, for every unsatisfied agent $j$, we map agent $i$ to $F_j$ in $M_j$, i.e., we set $M_j(i) =F_j$. In fact, $N_{j,F}$ corresponds to the satisfied agents that are not mapped to any bundle of $O_j$ in $M_j$. We use these agents to update $\ell_j$. Throughout the algorithm, whenever the total value of the items allocated to the agents in $N_{j,F}$ reaches $v_{j,\ell_j}/2$, $M_j$ becomes invalid and hence, we need to update $\ell_j$ and $M_j$ \textcolor{magenta}{to reinstate the validity of $M_j$}. 
To do so, we pick a subset $\delta$ of the agents in $N_{j,F}$ with the minimum size to map them to $O_{j,{\ell_j}}$.
Regarding the validity conditions of $M_j$, total value of the items allocated to the agents in $\delta$ must be at least $v_{j,\ell_j}/2$ and at most $v_{j,\ell_j}$ (we call such subset a \emph{compatible set}).
If a compatible set $\delta$ exists, we map the agents in $\delta$ to $O_{j,{\ell_j}}$ in $M_j$ and increase $\ell_j$ by one. However, there may be some cases that no subset of $N_{j,F}$ is compatible. For such cases, we use the argument in Lemma \ref{tweak}.

\begin{lemma}
\label{tweak}
Suppose that total value of the items allocated to the agents in $N_{j,F}$ is at least $v_{j,\ell_j}/2$, but $N_{j,F}$ admits no compatible subset. Then,
it is possible to modify $M_j$ such that conditions (i) and (ii) remain valid for $M_j$ and $N_{j,F}$ contains at least one compatible subset.
\end{lemma}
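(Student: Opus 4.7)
\textbf{Proof proposal for Lemma~\ref{tweak}.} My plan has three parts: first, distill the hypothesis into the existence of a single ``oversize'' agent $a^* \in N_{j,F}$ with $V_j(B_{a^*}) > v_{j,\ell_j}$; second, relocate $a^*$ into an appropriate earlier bucket of $M_j$ and push the displaced agents back into $F_j$; third, extract a compatible subset from the new $N_{j,F}$.

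For the first step I would pick $T \subseteq N_{j,F}$ of minimum cardinality with $\sum_{a \in T} V_j(B_a) \geq v_{j,\ell_j}/2$. Because $T$ is not compatible, its sum must exceed $v_{j,\ell_j}$; by minimality of $T$, removing any single member drops the partial sum below $v_{j,\ell_j}/2$, so every member of $T$ has $V_j(B_a) > v_{j,\ell_j}/2$. If $|T| \geq 2$, then any $|T| - 1$ members would already sum above $v_{j,\ell_j}/2$, contradicting minimality. Hence $T = \{a^*\}$ is a singleton and $V_j(B_{a^*}) > v_{j,\ell_j}$.

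For the second and third steps, let $k^*$ be the largest index in $\{1,\ldots,\ell_j - 1\}$ with $v_{j,k^*} \geq V_j(B_{a^*})$. I would reassign $M_j(a^*) := O_{j,k^*}$ and push every prior member of $N_{j,k^*}$ into $F_j$. Condition (ii) at bucket $k^*$ is immediate from the choice of $k^*$, and all buckets $k \neq k^*$ are untouched. If additionally $V_j(B_{a^*}) \geq v_{j,k^*}/2$, then (i) at $k^*$ also holds; otherwise I would cascade the same operation downward to index $k^* - 1$, and so on, with monotonicity of $v_{j,\cdot}$ guaranteeing termination within $k^*$ iterations and preserving the upper bound at each visited bucket. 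After the modification, the new $N_{j,F}$ absorbs the old $N_{j,k^*}$, whose $V_j$-sum lay in $[v_{j,k^*}/2, v_{j,k^*}]$. A greedy prefix-sum on the absorbed agents in increasing order of $V_j(B_a)$ either produces a subset whose sum lies in $[v_{j,\ell_j}/2, v_{j,\ell_j}]$ directly, or crosses $v_{j,\ell_j}/2$ at a single agent whose value is at most $v_{j,k^*}$ but bounded for compatibility by the Step~1 analysis applied recursively.

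The main obstacle, I expect, is verifying that $k^*$ is well defined, which reduces to the invariant $V_j(B_{a^*}) \leq v_{j,1}$ for every $a^* \in N_{j,F}$. I would prove this from the niceness of $\OPT_j$ together with the minimum-size selection rule of $\textsf{BC}$: at the moment $a^*$ was served, the bundle $B_{a^*}$ was the smallest subset of the then-remaining items satisfying some agent's expectation, and if it had $V_j$-value exceeding $v_{j,1}$ then a strictly smaller subset (obtained by removing items) would still attain value at least $v_{j,1}/2$, violating the structure forced by a nice optimal partition. A secondary technical issue is correctness of the cascade when the heavy agent is not large enough to singleton-fill its natural bucket; this requires careful accounting of displaced agents across consecutive buckets, but the monotone decrease of $v_{j,\cdot}$ and the fact that each cascade step can only push value into $F_j$ (never violating (ii) of an earlier bucket) should make this step bookkeeping rather than substantive.
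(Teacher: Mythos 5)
Your first step is exactly the paper's: the minimum-size subset of $N_{j,F}$ meeting the $v_{j,\ell_j}/2$ threshold must be a singleton $\{a^*\}$ with $\valu_j(B_{a^*}) > v_{j,\ell_j}$. The gap is in the placement step. The paper's proof rests on two facts you never establish: (a) $|B_{a^*}|=1$, i.e., the oversize bundle is a \emph{single item} $b$ --- this is a separate, non-trivial lemma (Lemma~\ref{shit}) that tracks the step at which $B_{a^*}$ was allocated and the expectation level of agent $j$ at that time; and (b) once $B_{a^*}=\{b\}$ with $\valu_j(\{b\}) > v_{j,\ell_j} \geq v_{j,n}$, the \emph{niceness} of $O_j$ forces $\{b\}$ to appear as its own bundle $O_{j,\ell}$ for some $\ell < \ell_j$, so $v_{j,\ell} = \valu_j(B_{a^*})$ exactly and conditions (i) and (ii) at bucket $\ell$ hold with equality --- there is no slack to manage. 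Your substitute (take the largest $k^* < \ell_j$ with $v_{j,k^*} \geq \valu_j(B_{a^*})$ and cascade if (i) fails) does not work: nothing guarantees that \emph{any} bucket $k < \ell_j$ satisfies $v_{j,k}/2 \leq \valu_j(B_{a^*}) \leq v_{j,k}$. For instance, with $v_j = [100,10,5]$, $\ell_j = 3$, $\valu_j(B_{a^*}) = 11$, your rule gives $k^*=1$ and condition (i) fails by nearly a factor of $5$; and ``cascading downward to $k^*-1$'' moves to a \emph{larger} target $v_{j,k^*-1} \geq v_{j,k^*}$, which makes (i) strictly harder, not easier.

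Two smaller issues. Your invariant $\valu_j(B_{a^*}) \leq v_{j,1}$ is the wrong use of niceness (and the sketch for it is unsound: the minimal bundle served by $\textsf{BC}$ was minimal for the \emph{served} agent's valuation, not for agent $j$'s, so removing items need not leave value $\geq v_{j,1}/2$ for agent $j$). And the final greedy prefix-sum extraction of a compatible subset from the new $N_{j,F}$ is not justified; the paper instead simply repeats the entire swap --- each repetition again yields a singleton, single-item $\delta$ placed into its exactly-matching optimal bundle --- and terminates by a potential argument counting the indices $\ell$ whose assigned agent $k$ has $B_k = O_{j,\ell}$.
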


Note that, after increasing $\ell_j$ for agent $j$, condition (iii) \textcolor{magenta}{may still be violated}. 
 In that case, as long as condition (iii) is violated, we continue updating. Each time we update $M_j$, value of $\ell_j$ is increased by one. Since at least one agent is mapped to $O_{j,\ell}$ for each $\ell < \ell_j$, $\ell_j$ never exceeds $n$. \textcolor{magenta}{In the appendix, you can find a pseudo-code for the updating process (Algorithm \ref{alg2})}



In the last part of this section, we prove Lemma \ref{satisfaction} which shows that the \textcolor{magenta}{value of the} externalities imposed to agent $i$ by the satisfied agents is lower-bounded by $\sum_{j<\ell_i} x_{i,j}v_{i,j}/2$. As said before, the fact that $\EMMS_i$ is defined with regard to the worst allocation of $O_i$ plays a key role in proving Lemma \ref{satisfaction}.

\begin{lemma}
\label{satisfaction}
Consider one step of the algorithm, and let agent $i$ be an arbitrary remaining agent with $\ell_i > 1$. Then, we have $\sum_{j \in \mathcal{S}} w_{j,i} \cdot \valu_i(B_j) \geq \sum_{j < \ell_i} x_{i,j} \cdot v_{i,j} /2.$
\end{lemma}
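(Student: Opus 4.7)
The plan is to chain three simple ingredients: (1) drop the non-negative contribution from $N_{i,F}$; (2) on each block $N_{i,t}$ pull the minimum influence $w^{*}_t := \min_{j \in N_{i,t}} w_{j,i}$ out of the sum and invoke condition~(i) of the valid mapping $M_i$ to bound the leftover by $v_{i,t}/2$; (3) compare the resulting sum $\sum_{t<\ell_i} w^{*}_t v_{i,t}$ against the target $\sum_{t<\ell_i} x_{i,t} v_{i,t}$ via a rearrangement inequality combined with a sub-multiset dominance argument. The first two steps are bookkeeping; the third is where the worst-allocation structure of $\EMMS_i$ actually does work.

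Carrying this out: starting from the partition $\mathcal{S} = \bigcup_{t<\ell_i} N_{i,t} \cup N_{i,F}$ and discarding $N_{i,F}$, then factoring $w^{*}_t$ out of each inner sum and applying $\sum_{j \in N_{i,t}} \valu_i(B_j) \geq v_{i,t}/2$, one obtains
\begin{align*}
\sum_{j \in \mathcal{S}} w_{j,i}\,\valu_i(B_j) \;\geq\; \sum_{t<\ell_i} w^{*}_t \sum_{j \in N_{i,t}} \valu_i(B_j) \;\geq\; \tfrac{1}{2} \sum_{t<\ell_i} w^{*}_t\, v_{i,t}.
\end{align*}
It remains to show $\sum_{t<\ell_i} w^{*}_t v_{i,t} \geq \sum_{t<\ell_i} x_{i,t} v_{i,t}$. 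Since each $w^{*}_t$ is realized by some distinguished agent $j^{*}_t \in N_{i,t}$ and the blocks $N_{i,t}$ are pairwise disjoint, the multiset $\{w^{*}_t\}_{t<\ell_i}$ is a sub-multiset of the full influence multiset $\{w_{j,i} : j \in [n]\}$, whose non-decreasing enumeration is exactly $x_{i,1} \leq \cdots \leq x_{i,n}$ by the definition of $x_i$. Removing elements from a sorted multiset cannot decrease any surviving entry, so writing $w^{*}_{(1)} \leq \cdots \leq w^{*}_{(\ell_i - 1)}$ for the sort of $\{w^{*}_t\}$, one has $w^{*}_{(t)} \geq x_{i,t}$ term by term. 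Because $v_{i,t}$ is non-increasing in $t$, the rearrangement inequality then yields
\begin{align*}
\sum_{t<\ell_i} w^{*}_t v_{i,t} \;\geq\; \sum_{t<\ell_i} w^{*}_{(t)} v_{i,t} \;\geq\; \sum_{t<\ell_i} x_{i,t} v_{i,t},
\end{align*}
closing the argument.

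The main obstacle is this last step, which layers the rearrangement inequality (exploiting monotonicity of $v_i$) on top of a sub-multiset dominance argument (exploiting the definition of $x_i$ as the non-decreasing sort of all influences on agent $i$). This is precisely where the worst-allocation structure underlying $\EMMS_i$ enters in an essential way: the $x_i$ on the right-hand side is by definition the sorted influence vector that the worst allocation of $\OPT_i$ pairs with the bundle values, and the sub-multiset step is what translates ``smallest influences within the satisfied subset'' into ``smallest influences globally.'' It is worth noting that only the lower-bound condition~(i) of the valid mapping is used in this proof; the upper bounds (ii) and (iii) do not enter here and are instead used in the complementary argument that enough items remain to meet the expectations of the yet-unsatisfied agents.
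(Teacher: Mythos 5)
Your proof is correct and follows essentially the same route as the paper's: both arguments reduce each block $N_{i,t}$ to its minimum-influence member, use condition (i) to lower-bound that block's contribution by $v_{i,t}/2$ times that minimum influence, and then compare the resulting multiset of per-block minimum influences against the sorted vector $x_i$. The paper packages this final comparison as an explicit sequence of bundle exchanges, whereas you use sub-multiset dominance plus the rearrangement inequality --- a cleaner formalization of the same idea.
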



\bibliographystyle{plain}
\bibliography{refs}



\newpage
\appendix
\section{Missing Proofs}
\textbf{Proof of Lemma \ref{cam}.}
Consider $O_i$ and let $\mathbb A = \{ A^1, A^2, \ldots, A^n \}$, where $A^k$ is an allocation of $O_i$ that allocates $O_{i,j}$ to agent $j' = ((j+k-1) \mbox{ mod } n)+1.$
Since in $\mathbb A$ each item is allocated to each agent once, $\sum_j \util_i({A^j}) = \sum_j \valu_{j,i}(\items).$ Thus, the worst allocation in set $\mathbb A$ has a utility of at most $ \sum_j{\valu_{j,i}(\items)}/n = \overline{\valu}_i(\items)$ for agent $i$. As a result, $\EMMS_i = \util_i({\worst_i(O_i)}) \leq \overline{\valu}_i(\items)$. 
\qed
\\[0.3cm]
\textbf{Proof of Lemma \ref{cmp}.}
	Consider an instance with one item $\ite$ and $n$ agents, and the following two scenarios for the externalities: 
	\begin{enumerate}[label=(\roman*)]
		\item $\forall_{i \neq j}, \valu_{j,i}(\{\ite\}) = 0$
		\item $\forall_{i \neq j}, \valu_{j,i}(\{\ite\}) = \valu_{i,i}(\{\ite\})$
	\end{enumerate}
	In scenario $(i)$, $\EMMS_1 = 0$, while the extended-proportional share of agent $1$ is $\valu_{1,1}(\{\ite\})/n $. In the second scenario, $\EMMS_1 = \valu_{1,1}(\{\ite\})$, but the extended-proportional share of agent $1$ is still $ \valu_{1,1}(\{\ite\})/n$.
\qed
\\[0.3cm]
\textbf{Proof of Lemma \ref{cac}.}
We know $\util_2(\best_2(O_1)) \geq \EMMS_2$. Furthermore, since $\worst_1(O_1)$ determines the value of $\EMMS_1$, we have: $\util_1(\best_2(O_1)) \geq \EMMS_1.$
\qed
\\[0.3cm]
\textbf{Proof of Lemma \ref{fnp}.}
	Consider a complete bipartite graph $G(X, Y)$ where $X$ represents the bundles of $P$, and $Y$ represents the agents and there is an edge with weight $\valu_{j,i}(P_k)$ between every pair $x_k \in X$ and $y_j \in Y$. Finding $\worst_i(P)$ is equivalent to finding the maximum cardinality matching with minimum weight in $G$. Classic network flow algorithms solve this problem in polynomial time \cite{cormen2009introduction}. 
\qed
\\[0.3cm]
\textbf{Proof of Claim \ref{nohuge}.}
	Consider $L_{i,1}$ (the most valuable bundle of $L_i$ for agent $i$). Trivially, we have $\valu_i(L_{i,1}) \geq \AVG_i(\items)$, and since there is no huge item, $L_{i,1}$ contains at least two items. On the other hand, according to $\LPT$, the items within a bundle arrive in non-increasing order. Therefore, the last item added to $L_{i,1}$ has a value of at most $\valu_i(L_{i,1})/2$ and the total value of $L_{i,1}$ just before the last item arrives must have been at least $\valu_i(L_{i,1})/2$. Furthermore, whenever an item is added to a bundle, that bundle has the minimum value among all the bundles. Therefore, $\valu_i(L_{i,n}) \geq \valu_i(L_{i,1})/2 \geq  \AVG_i(\items)/2$. 
\qed
\\[0.3cm]
\textbf{Proof of Claim \ref{subsetswitch}.}
	Since $P$ is not nice, there exists an item $\ite$ in bundle $P_j$, such that $\valu_i(P_j) > \valu_i(\{b\}) > \valu_i(P_n)$. We modify $P$ as follows: we remove $P_j$ and $P_n$ from $P$ and add two new bundles $A = \{ b \}$ and $B = P_j \cup P_{n} \setminus \{ b\}$ to $P$. Let  $l$ and $l'$ be the indices of the newly added bundles in $P$ (note that  the bundles are rearranged by their decreasing values for agent $i$), such that $j \leq l \leq l' \leq n$ (see Figure \ref{proofassist}). We have $$\valu_i(P_j) > \max(\valu_i(A), \valu_i(B)) \geq \min(\valu_i(A), \valu_i(B)) > \valu_i(P_{n}).$$ By this modification, $\util_i(\worst_i(P))$ increases by a value of at least
	\begin{align*} 
	x_{i,l} \cdot (\max(\valu_i(A), \valu_i(B)) - \valu_i(P_j)) 
	+ \hspace{0.1cm}  x_{i,l'} \cdot  (\min(\valu_i(A), \valu_i(B)) - \valu_i(P_n)),
	\end{align*}
	which is non-negative since $$\valu_i(P_j) + \valu_i(P_n) = \max(\valu_i(A), \valu_i(B)) + \min(\valu_i(A), \valu_i(B)),$$ and $x_{i,l} \leq x_{i,l'}$.
	Let $\lon(P) = \{ P_j \mid \valu_i(P_j) = \valu_i(P_n) \}. $ 
	After each modification, either $V_i(P_n)$ increases, or $|\lon(P)|$ decreases. Therefore, sequence $(\valu_i(P_n), \valu_i(P_{n-1}), \ldots, \valu_i(P_1))$ increases lexicographically by each move, and hence we \textcolor{magenta}{eventually} end up with a nice partition $P'$ after a finite number of modifications.
	

	\begin{figure}[!tbp]
		\centering
		\begin{minipage}[b]{0.49\textwidth}
			\centerline{
				\includegraphics[scale=0.3]{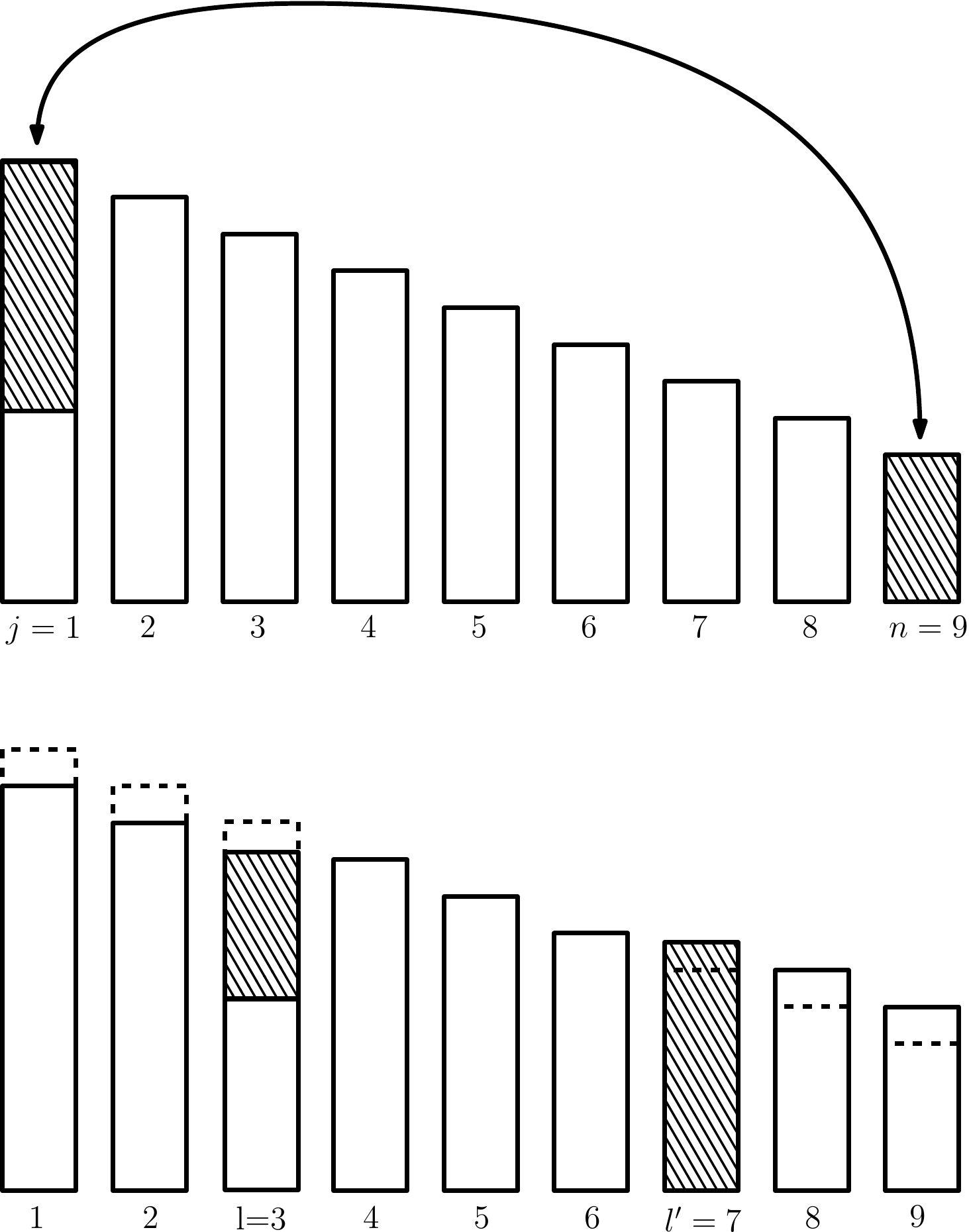}
			}
			\caption{Switching the subsets}
			\label{proofassist}
		\end{minipage}
		\begin{minipage}[b]{0.49\textwidth}
			\centerline{
				\includegraphics[scale=0.9]{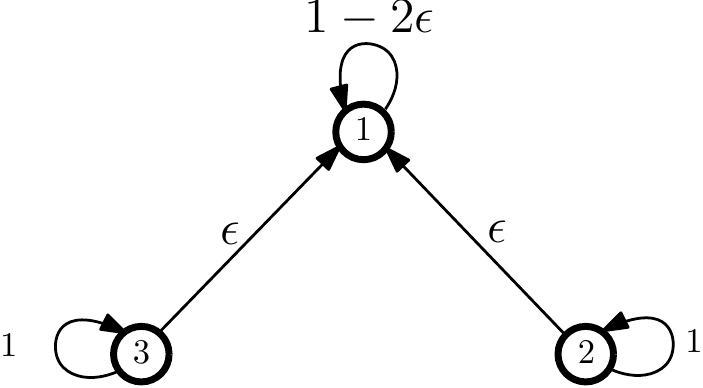}
			}
			\caption{The gap between $\MMS_i$ and $\EMMS_i$ may be large, even with very small externalities}
			\label{c-ex}
		\end{minipage}
	\end{figure}  
	
\qed
\\[0.3cm]
\textbf{Proof of Observation \ref{gap}.}
	Simply consider the influence graph depicted in Figure \ref{c-ex} and three items $\ite_1,\ite_2$ and $\ite_3$ such that
	$
	\valu_1(\{\ite_1\}) = 1 $ and $
	\valu_1(\{\ite_2\}) = \valu_1(\{\ite_3\}) = c/\epsilon
	$,
	where $\epsilon$ is a small constant less than $1/2$.
	For this instance, $\EMMS_1 = (1-2\epsilon) + 2c$, and \textcolor{magenta}{$\MMS_1 = 1-2\epsilon$} 
	which means $\MMS_1 / \EMMS_1 < 1/c $.
\qed
\\[0.3cm]
\textbf{Proof of Observation \ref{minbun}.}
	For every remaining agent $i$, we find a bundle $\Gamma_i$ with the minimum size which meets the expectation of agent $i$ as follows: sort the remaining items in their decreasing values for agent $i$, and add the items to $\Gamma_i$ one by one until the bundle meets the expectation of agent $i$. \textcolor{magenta}{Finally}, it only suffices to select the smallest bundle among these bundles.
\qed
\\[0.3cm]
\textbf{Proof of Lemma \ref{tweak}.}
	Let $\delta$ be a subset of $N_{j,F}$ with the minimum size that satisfies condition (i). Such a set trivially exists. Since no subset of  $N_{j,F}$ is compatible, we have
	$
	\sum_{k \in \delta} \valu_j(B_k) > v_{j,\ell_j}.
	$
	By minimality of $\delta$, no proper subset of $\delta$ satisfies condition (i). It is easy to observe that this can only happen when $\delta$ contains only one agent, say $k$, with $\valu_i(B_k) >  v_{j,\ell_j}$. We show in Lemma \ref{shit} that $|B_k| = 1$; but for now suppose that $\ite$ is the only item in $B_k$.
	
	
	Since $O_j$ is nice\footnote{Recall the niceness from Definition \ref{nicedef}}, there is an index $\ell < \ell_j$, such that bundle $O_{j,\ell} = \{\ite\}$. We modify $M_j$ as follows: we map agent $k$ to $O_{j,\ell}$ and map the former agents of $N_{j,\ell}$ to $F_j$. Clearly, conditions (i) and (ii) preserve for $M_j$ after this process. Again, if no subset of $N_{j,F}$ is compatible, we repeat this modification. Each time we modify $M_j$, the number of indices $\ell$ for which $O_{j,\ell}$ is mapped to an agent $j$ with $O_{j,\ell} = B_j$ increases by one. Therefore, the process terminates after a finite number of modifications.
\qed
\\[0.3cm]
\color{magenta}
Algorithm \ref{alg2} illustrates an overview of the update procedure, which completes the $\textsf{BC}$ algorithm. 
\color{black}
\\[0.3cm]
\begin{algorithm}[H]
	$Resolve = 0$\\
	\While{$Resolve == 0$}{
	$\delta \gets$ Minimum sized subset of $N_{j,F}$, s.t. $\sum_{k \in \delta} V_j(B_k) \geq 1/2 \cdot v_{j,\ell_j}$ \\
	\eIf {$\sum_{k \in \delta} V_j(B_k)  \leq v_{j, \ell_j}$}{
		$N_{j,\ell_j} = \delta$\\
		$N_{j, F} = N_{j,f} \setminus \delta$   \\	
		$\ell_j \gets \ell_j + 1$  \\
		$Resolve+=1$ \Comment{Resolved}\\
	}
	{
		Let $\ell$ be an index s.t. $O_{j,\ell} = B_k$, where $\delta = \{k\}$. \\
		Swap $\delta$ (which is a subset of $N_{j,F}$) with $N_{j,\ell}$. \Comment{One step closer to resolve} \\
	}
	}
	\caption{Update $M_j$}
	\label{alg2}
\end{algorithm}

\begin{lemma}
	\label{shit}
	Suppose that the total value of the items allocated to the agents in $N_{j,F}$ is at least $v_{j,\ell_j}/2$, but $N_{j,F}$ admits no compatible subset, and let $\delta$ be the minimal subset of $N_{j,F}$ that satisfies condition (i). Then, $\delta$ contains only one agent, say agent $k$ and $|B_k|=1$.
\end{lemma}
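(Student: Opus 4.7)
My plan is to prove the two conclusions of the lemma in turn, using the minimality of $\delta$ together with the structure imposed by the algorithm and the niceness of $O_j$.

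First, for $|\delta|=1$: suppose for contradiction that $|\delta| \geq 2$. Minimality of $\delta$ means that every proper subset of $\delta$ fails condition~(i), hence in particular each singleton $\{k'\}$ with $k' \in \delta$ satisfies $\valu_j(B_{k'}) < v_{j,\ell_j}/2$. On the other hand, the assumption that $N_{j,F}$ admits no compatible subset, combined with the fact that $\delta$ itself satisfies~(i), forces $\sum_{k' \in \delta}\valu_j(B_{k'}) > v_{j,\ell_j}$. But then for any $k_0 \in \delta$ we get $\sum_{k' \in \delta \setminus \{k_0\}}\valu_j(B_{k'}) > v_{j,\ell_j} - v_{j,\ell_j}/2 = v_{j,\ell_j}/2$, so $\delta \setminus \{k_0\}$ also satisfies~(i), contradicting the minimality of $\delta$. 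Therefore $\delta = \{k\}$ for some agent $k$, and $\valu_j(B_k) > v_{j,\ell_j}$.

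For the second claim, I argue by contradiction, supposing $|B_k| \geq 2$, and trace back to the moment $t$ at which the algorithm allocated $B_k$ to agent $k$. By construction, $B_k$ was the minimum-sized bundle meeting some remaining agent's expectation at that time; hence no singleton of a then-available item could meet \emph{any} remaining agent's expectation. In particular, letting $\ell_j^{(t)} \leq \ell_j$ denote agent $j$'s expectation level at time $t$, every item $b \in B_k$ satisfies $\valu_j(\{b\}) < v_{j,\ell_j^{(t)}}/2$, for otherwise $\{b\}$ alone would have been a strictly smaller size-$1$ candidate allocatable to $j$ (or to whoever $\{b\}$ satisfied), beating $B_k$ in size. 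The task is then to combine this per-item upper bound with the aggregate lower bound $\valu_j(B_k) > v_{j,\ell_j}$ to derive a contradiction, and this is where the niceness of $O_j$ (established in Claim~\ref{subsetswitch}) enters.

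Concretely, the niceness dichotomy says that each $b \in B_k$ lies in a bundle $O_{j,m_b}$ that either equals $\{b\}$ (a singleton with $v_{j,m_b}=\valu_j(\{b\})$) or has $\valu_j(\{b\}) \leq v_{j,n}$. The singleton-type items of $B_k$ therefore occupy distinct bundles of $O_j$ whose values are strictly less than $v_{j,\ell_j^{(t)}}/2$, while each small-type item contributes at most $v_{j,n} \leq v_{j,\ell_j}$ to $\valu_j(B_k)$. The main obstacle is handling the case in which $\ell_j^{(t)}$ is strictly smaller than the current $\ell_j$: between time $t$ and now, $\ell_j$ has been incremented by successful updates, each of which locked a compatible collection of $N_{j,F}$ onto some $O_{j,l}$. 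Tracking these updates, I would argue that any singleton-type item $b \in B_k$ with index $m_b < \ell_j$ would furnish an alternative successful tweak for the current update step (by mapping $k$ into $O_{j,m_b}$ exactly as in the proof of Lemma~\ref{tweak}), producing a compatible subset within $N_{j,F}$ and contradicting the standing hypothesis. Combined with the small-type bound, this forces $\valu_j(B_k) \leq v_{j,\ell_j}$, contradicting $\valu_j(B_k) > v_{j,\ell_j}$ and establishing $|B_k|=1$.
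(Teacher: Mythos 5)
Your argument for $|\delta|=1$ is correct and complete (indeed more explicit than the paper, which dismisses that half as easy to observe). The problem is the second half, $|B_k|=1$, which is the real content of the lemma: what you have is a plan whose decisive step is missing, and the contradiction you claim does not follow from the bounds you assemble. At the step $t$ where $B_k$ was allocated you correctly obtain $\valu_j(\{\ite\}) < v_{j,\ell_j^{(t)}}/2$ for every $\ite \in B_k$, and by the same minimality reasoning one can also get $\valu_j(B_k) < v_{j,\ell_j^{(t)}}$; but since $\ell_j^{(t)} \leq \ell_j$ and the values $v_{j,\cdot}$ are non-increasing, these are perfectly consistent with $\valu_j(B_k) > v_{j,\ell_j}$, so there is no numerical contradiction. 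Your ``small-type'' bound also does not control $\valu_j(B_k)$, because nothing limits how many such items $B_k$ contains. Finally, the proposed escape route --- that a singleton bundle $O_{j,m_b}$ with $m_b<\ell_j$ ``furnishes an alternative tweak, contradicting the standing hypothesis'' --- is both circular (the modification of Lemma \ref{tweak} is only well-defined when $|B_k|=1$, which is exactly what you are trying to prove) and logically off: producing a compatible subset \emph{after} modifying $M_j$ does not contradict the hypothesis that $N_{j,F}$ currently admits none; that is precisely what Lemma \ref{tweak} is designed to do.

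The missing ingredients are the ones the paper's proof supplies. First, a dichotomy on $O_j$ (Claim \ref{clm}): either every $O_{j,\ell}$ with $\ell\le\ell_j^{(t)}$ is a singleton --- in which case one of those items is still unallocated at step $t$ and would have been a size-one candidate beating $B_k$, a contradiction --- or, after a niceness-style exchange, $v_{j,n}\ge v_{j,\ell_j^{(t)}}/2$. Second, in the latter case one shows $v_{j,\ell_j^{(t)}}/2 < \valu_j(B_k) < v_{j,\ell_j^{(t)}}$, so $\{k\}$ is the compatible set at the update following step $t$ and $k$ gets mapped to $O_{j,\ell_j^{(t)}}$; a ``first bad step'' argument (all earlier modifications only relocate agents holding singleton bundles) then shows $k$ is never moved back to $F_j$, contradicting $k\in N_{j,F}$ at the current step. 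Without this tracking of how $M_j$ evolves between step $t$ and the present step, the proof does not close.
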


\begin{proof}
	As mentioned in Lemma \ref{tweak}, it is easy to observe that $|\delta|=1$. Here, we argue that if agent $k$ is the only agent in $\delta$, then $|B_k|=1$.
	As a contradiction, let $z_1$ be the first step of the algorithm that $\delta = \{k\}$, but $|B_k|>1$. In addition, let $z_2$ be the step that $B_k$ was allocated to agent $k$ and let $\ell'_j$ be the expectation level of agent $j$ in step $z_2$. Trivially, we have $z_2 \leq z_1$.

	\begin{claim}
		\label{clm}
		Either $v_{j,n} \geq v_{j,\ell'_j}/2$ or we have $|O_{j,\ell}| = 1$ for all $\ell \leq \ell'_j$.
	\end{claim}
	
	\textbf{Proof of Claim \ref{clm}.}
	If for some $\ell \leq \ell'_j$, $O_{j,\ell}$ contains more than one item, $O_{j,\ell}$ has a proper subset $s$ such that $\valu_j(s) \leq v_{j,\ell}/2$. By the same reasoning as  Claim \ref{subsetswitch}, moving $s$ to bundle $O_{j, n}$ yields a new partition which is at least as good as $O_j$ (See Figure \ref{proofassist}).  Hence, we can assume w.l.o.g. that $v_{j,n} \geq v_{j,\ell'_j}/2$ holds.
	\qed
	
	\vspace{0.3cm}
	Regarding Claim \ref{clm}, we consider two cases.
	
	\vspace{0.3cm}
	\textbf{First}, assume that $|O_{j,\ell}| = 1$ for all $\ell \leq \ell'_j$.
	For this case, at least one of the items in $\bigcup_{\ell \leq \ell'_j}{O_{j,\ell}}$ is not allocated to any agent before step $z_2$, and this item singly meets the expectation of  agent $j$. This contradicts the fact that at step $z_2$, $B_k$ was the minimal set (Note that we supposed $|B_k|>1$).
	
	\vspace{0.3cm}
	\textbf{Second}, assume that $v_{j,\ell_j} \geq v_{j,\ell'_j}/2$. In step $z_2$, the expectation value of agent $j$ equals $v_{j,\ell'}/2$. Furthermore, $\valu_j(B_k) >v_{j,\ell_j}$ which means $\valu_j(B_k) > v_{j,\ell'_j}/2$. On the other hand, $\valu_j(B_k) < v_{j,\ell'_j}$,  otherwise a proper subset of $B_k$ would meet the expectation of agent $j$ in step $z_2$.
	Therefore, in step $z_2$, $\delta = \{ k \}$ is the only compatible set for updating $M_j$ and hence, agent $k$ is mapped to $O_{j,\ell'_j}$. This also implies that $z_2  \neq z_1$, since we supposed that no compatible subset exists in step $z_1$.

	Furthermore, notice that since $|B_k|>1$, no item could singly meet the expectation of any agent, including agent $j$ in step $z_2$. This means that every remaining item in step $z_2$ has the value less than $v_{j,\ell_j}/2$. On the other hand, in all the modifications before step $z_1$, the bundle allocated to the agent in $\delta$ consists of only one item ($z_1$ is te first step that the size of the bundle allocated to the agent in $\delta$ is more than $1$). This means that after step $z_2$, no modification affects the agents that are mapped to bundles $O_{j,\ell}$ for $\ell \leq \ell'_j$. But this contradicts the fact that agent $k$ is mapped to $F_j$ in step $z_1$, because agent $k$ is mapped to $O_{j,\ell'_j}$ and no modification changes $M_j(k)$.
\end{proof}

\textbf{Proof of Lemma \ref{satisfaction}.}
	We want to show that in every step of the algorithm, for each remaining agent $i$, Inequality \ref{eq1} holds. 
	\begin{equation}
	\label{eq1}
	\sum_{j \in \mathcal{S}} w_{j,i} \cdot \valu_i(B_j) \geq \sum_{j < \ell_i} x_{i,j} \cdot v_{i,j} /2.
	\end{equation}

	To prove this, we apply a sequence of exchanges between the bundles allocated to the agents in $\bigcup_{j < \ell_i} N_{i,j}$ and show that in every exchange, value of the expression on the left-hand side of Inequality \eqref{eq1} does not increase
	\footnote{Note that these exchanges are only to prove this lemma, and not in the algorithm.}. 
	Next, we show that after these exchanges, Inequality \eqref{eq1} holds, which means that the Inequality was held for the original allocation. 
	
	Let agent $j$ be the agent in $N_{i,1}$ with the least influence on agent $i$ (i.e., minimizes $w_{j,i}$).  First, we allocate the bundles that belong to the other agents in $N_{i,1}$ to agent $j$ and remove all the agents but agent $j$ from $N_{i,1}$.
	Since agent $j$ has the minimum weight among the agent in $N_{i,1}$, this operation does not increase the left-hand side of Inequality \eqref{eq1}. 
	
	In addition, let agent $j'$ be the agent with $w_{j',i} = x_{i,{1}}$. Since agent $j'$ has the minimum weight among all the agents, $w_{j',i} \leq w_{j,i}$.  Now, let $B_j$ and $B_{j'}$ be the current bundles of agents $j$ and $j'$. Note that if $j'$ is not satisfied yet, $B_{j'} = \emptyset$. If $\valu_i(B_{j'}) < \valu_i(B_j)$, we swap the bundles of $j$ and $j'$. This operation also does not increase the left-hand side of Inequality \eqref{eq1} since we have $w_{j',i} \leq w_{j,i}$. Finally, we exchange the set that agents $j$ and $j'$ belong to: we remove agent $j$ from $N_{i,1}$, and add agent $j'$ to $N_{i,1}$. In addition, if agent $j'$ previously belonged to $N_{i,r}$ for some $r$, we add agent $j$ to $N_{i,r}$. This exchange has no effect on the value of $\sum_{j \in S} w_{j,i} \cdot \valu_i(B_j)$. Furthermore, one can easily observe that after the exchange, condition (ii) holds. 
	
	We repeat the same procedure for $N_{i,2},N_{i,3},\ldots, N_{i,{\ell_i-1}}$.
	After this sequence of exchanges, each $N_{i,j}$ contains one agent $j'$, where $w_{j',i} = x_{i,j}$.  Furthermore, after the exchanges, the second condition for the validity of $M_i$ holds and hence, the value of the items of agent $j'$ for agent $i$ is at least $v_{i,j}/2$.  Therefore, total amount of externalities of the satisfied agents is at least $\sum_{j<\ell_i} x_{i,j}v_{i,j}/2$.
\qed
\\[0.3cm]

\end{document}